\def\dref#1{(\ref{#1})}
\newtheorem{assumption}{Assumption}
\newtheorem{lemma}{Lemma}
\newtheorem{theorem}{Theorem}
\newtheorem{remark}{Remark}
\newtheorem{proof}{Proof}
\begin{document}
%
\title{Event-Triggered Consensus of Homogeneous and Heterogeneous Multi-Agent Systems with Jointly Connected Switching Topologies}
%

\author{Bin~Cheng, Xiangke Wang,~\IEEEmembership{Senior Member,~IEEE},
        and~Zhongkui~Li,~\IEEEmembership{Member,~IEEE}
\thanks{This work was supported in part by the National Natural Science Foundation of China under grants 61473005, 61403406, U1713223, 11332001, and by Beijing Nova Program under grant 2018047.}
\thanks{B. Cheng and Z. Li are with the State Key Laboratory for Turbulence and Complex Systems, Department of Mechanics and Engineering Science, College of Engineering, Peking University, Beijing 100871, China.
E-mail: \tt \{bincheng,zhongkli\}@pku.edu.cn}
\thanks{X. Wang is with the College of Mechatronic Engineering and Automation, National University of Defense Technology, Changsha, China.
E-mail: {\tt xkwang@nudt.edu.cn}}
\thanks{Corresponding author: Zhongkui Li.}}

\maketitle

\begin{abstract}
This paper investigates the distributed event-based consensus problem of switching networks satisfying the jointly connected condition.
Both the state consensus of homogeneous linear networks and output consensus of heterogeneous networks are studied.
Two kinds of event-based protocols based on local sampled information are designed, without the need to solve any matrix equation or inequality.
Theoretical analysis indicates that the proposed event-based protocols guarantee the achievement of consensus and the exclusion of Zeno behaviors for jointly connected undirected switching graphs.
These protocols, relying on no global knowledge of the network topology and independent of switching rules, can be devised and utilized in a completely distributed manner.
They are able to avoid continuous information exchanges for either controllers' updating or triggering functions' monitoring, which ensures the feasibility of the presented protocols.
\end{abstract}

\begin{IEEEkeywords}
Homogeneous network, heterogeneous network, event-triggered control, jointly connected switching topologies, consensus.
\end{IEEEkeywords}

%
\IEEEpeerreviewmaketitle

\section{introduction}
Event-driven coordination has been widely studied and started maturing to soon stand alone
in the control area
in the last decade \cite{Lewis2014cooperative,z-k-li2010consensus,
ZLi2015designing,ZLi2014cooperative,zhongkui17robust,b-a-khashooei2017output,
y-q-wu2016an,xiao-hua-ge2017distributed,lei-ding2018distributed}.
Compared to classic continuous control approaches, event-based control has numerous advantages especially in enhancing control efficiency, such as avoiding continuously updating controllers and continuous communications among neighboring agents.
The latter advantage is particularly evident when we focus on Internet of Things and other large-scale networks where the cyber operations, including processing, storage, and communication, must be viewed as a scare, globally shared resource \cite{Nowzari17Event}.
Due to these practical considerations, it is not surprising that so many researchers are interested in event-triggered control and present plenty of results.
Applying event-driven control in networked systems poses some new challenges that do not exist in either area alone \cite{Nowzari17Event}.
As pointed out in \cite{Nowzari17Event}, researchers must consider how to deal with the natural asynchronism introduced into the systems and how to rule out the Zeno behavior.
Another challenge is that the separation principle cannot be used for event-triggered control systems anymore \cite{CRamesh2011}.

Existing works have presented a large number of insights into general coordination of networked systems with event-triggered mechanisms.
As a specific case study, event-triggered consensus is a longstanding area of research in multi-agent systems; see the references
\cite{Henningsson2008sporadic,Astrom1999comparison,PTabuada2007event,
Heemels2008analysis,Heemels2012an,RZheng2016stability,BCheng2018fully,BCheng2018consensus}.
Many survey papers about event-driven control were published, such as \cite{Nowzari17Event,lei-ding2018an,xian-ming-zhang2017an,xian-ming-zhang2016survey}.
Generally speaking,
existing consensus protocols are designed for either state consensus of homogeneous networks or output consensus of heterogeneous networks.
Noting that for heterogeneous networks, where even the dimensions of states may be different, output consensus is a more meaningful topic than state consensus.

In the field of state consensus of homogeneous networks, \cite{Garcia2013decentralised,Meng2013event,DVDimarogonas2012distributed} presented event-based protocols for single-integrator agents under undirected graphs.
To remove the limitation that continuous information was still required in triggering functions of early works,
\cite{Seyboth2013event} proposed triggering functions only based on discrete sampled information.
The authors of \cite{DYang2016Decentralized,Zhu2014event,Guo2014a} presented event-driven consensus algorithms for general linear networks.
Reference \cite{HZhang2014observer} studied the event-driven consensus using output feedback control.
The event-based consensus control problem with external disturbances was studied in \cite{LXing2017event,JLiu2017fixed,xiao-hua-ge2017event}.
Event-driven output consensus of heterogeneous networks was studied in \cite{WHu2017output,xdliu2017distributed}.
The authors of \cite{WHu2017cooperative} studied event-based cooperative output regulation problem of heterogeneous networks.

It should be noted that the proposed protocols in the above works were only designed for fixed and connected topologies.
However, in many practical cases, the topologies may be switching \cite{WNi2010leader,Su2012output,xiao-hua-ge2017consensus,bo-da-ning2018collective} and do not satisfy the connected condition.
In \cite{adaldo2015event}, the authors proposed an event-driven protocol for networks with switching communication graphs.
One limitation of the protocol in \cite{adaldo2015event}, that the triggering functions were designed based on continuous information, may limit its practical applicability.
To avoid continuous interagent communication, \cite{thcheng2017event} proposed decentralized event-based controllers for leader-follower networks under fixed or switching graphs.
The results of \cite{thcheng2017event} relied on an assumption that the (switching) topology is connected at every moment, which was not always satisfied for general switching topologies.
In particular, there were even no any connections among agents at some special instants.
This assumption was removed by the authors of \cite{z-g-wu-event,wyxu2017event}, in which similar problems were considered.
The designs of the protocols proposed in \cite{z-g-wu-event,wyxu2017event}, nevertheless, required to solve two coupled inequalities, while the existence of the solution is unclear in general cases.
The switching nature of topologies coupled with event-triggered communications makes it troublesome to propose distributed consensus algorithms,
and the existence of heterogeneity renders the task for heterogeneous networks more challenging.
How to devise event-triggered consensus algorithms for linear homogeneous (or heterogeneous) networks with general switching topologies needs further investigation.

In the current paper, we study the event-driven consensus control problems with switching graphs, including state consensus of homogeneous linear networks and output consensus of heterogeneous linear networks.
For the homogeneous case,
we present an event-based protocol, composed of controllers and triggering rules.
Under this protocol, communications will not take place until the topology switches or the designed measurement error exceeds an appropriate threshold.
It is shown that state consensus is achieved and Zeno behaviors are ruled out.
The protocol can be explicitly constructed and do not need to solve any matrix equation or inequality.
We also consider event-based output consensus of heterogeneous networks with switching topologies and an exogenous signal that can be viewed as a reference input or an external disturbance.
For this problem, we first devise distributed observers to estimate the exogenous signal and then propose local control inputs.

The main contributions of this paper are listed as follows.
We have solved both the event-based state consensus control problem of homogeneous networks and the event-based output consensus control problem of heterogeneous networks.
Different from existing related papers, the proposed event-triggered protocols of this paper can be used for any switching graphs satisfying the jointly connected condition, including fixed graphs as a special case.
The proposed protocols, requiring no global information associated with the whole network and independent of the switching rules, can be devised and utilized in a completely distributed manner.
The Zeno behavior can be excluded at any finite time by showing that the interval between any different triggering instants is not less than a strictly positive value.
This feature ensures the feasibility of the above protocols when they are implemented on practical systems.

Here is the outline of this paper.
In Section \ref{section_state}, we consider the event-driven state consensus of homogeneous networks.
We then study event-based output consensus of heterogeneous networks in Section \ref{section_output}.
Numerical simulations and conclusions are presented in Sections \ref{s_sim} and \ref{s_con}, respectively.

%

\section{event-based state consensus of homogeneous multi-agent systems}\label{section_state}

\subsection{Problem Formulation}
In this section, we consider $N$ homogeneous linear agents, whose dynamics satisfy
\begin{equation}\label{sys_homo}
\begin {aligned}
\dot x_i=Ax_i+Bu_i,~i=1,\cdots,N,
\end{aligned}
\end{equation}
where $x_i\in \mathbf{R}^n$ denotes the state, $u_i\in \mathbf{R}^p$ represents the control input, and $A\in \mathbf{R}^{n\times n}$, $B\in\mathbf{R}^{n\times p}$ are constant matrices.

\begin{assumption}\label{assumption_A}
The pair $(A,B)$ is stabilizable and $A$ is neutrally stable \footnote{A matrix $A\in \mathbf C^{n\times n}$ is neutrally stable in the continuous-time sense if it has no eigenvalue with positive real part and the Jordan block corresponding to any eigenvalue on the imaginary axis is of size one, while is Hurwitz if all of its eigenvalues have strictly negative real parts \cite{ZLi2014cooperative}.}.
\end{assumption}

Denote $\theta:~[0,+\infty)\rightarrow \Theta$ as a switching signal with a positive dwelling time $\tau$.
Let $\mathcal {G}_{\theta(t)}\triangleq(\mathcal {V}, \mathcal
{E}_{\theta(t)})$ represent an undirected graph among the $N$ agents, where $\mathcal {V}=\{1,\cdots,N\}$ and $\mathcal {E}_{\theta(t)}\subseteq\mathcal {V}\times\mathcal
{V}$ denote the sets of nodes and edges, respectively.
Consider an infinite time sequence composed of nonempty, bounded, and contiguous intervals
$[\bar t_0,\bar t_1)$, $\cdots$, $[\bar t_k,\bar t_{k+1})$, $\cdots$, with $\bar t_0=0$.
Suppose $\bar t_{k+1}-\bar t_k\leq T$ with $T$ being some positive constant and during each interval $[\bar t_k,\bar t_{k+1})$, there are finite nonoverlapping subintervals
$$[\bar t_k^0,\bar t_k^1),[\bar t_k^1,\bar t_k^2),\cdots,[\bar t_k^{m_k-1},\bar t_k^{m_k}),~\bar t_k=\bar t_k^0,~\bar t_{k+1}=\bar t_k^m,$$
satisfying $\bar t_k^{j+1}-\bar t_k^j\geq \tau$, $j=0,1,\cdots,m_k-1$.
And $\mathcal G_{\theta(t)}$ is fixed during each subinterval.
An edge of $\mathcal E_{\theta(t)}$ is composed of two distinct nodes of $\mathcal V$. If $(i,j)\in\mathcal {E}_{\theta(t)}$, $i$
and $j$ are neighbors under graph $\mathcal G_{\theta(t)}$.
An undirected path between nodes $i$ and $j$ is denoted as $({i_1}, {i_{2}})$, $({i_2}, {i_{3}})$, $\cdots$, $({i_q}, j)$.
Denote the adjacency matrix of graph $\mathcal G_{\theta(t)}$ by $\mathcal A(t)=[a_{ij}(t)]\in \mathbf R^{N\times N}$, where $a_{ii}(t)=0$, $a_{ij}(t)=1$ if $(j,i)\in \mathcal E_{\theta(t)}$ and $a_{ij}(t)=0$ otherwise.
Denote the Laplacian matrix $\mathcal L_{\theta(t)}=[l_{ij}(t)]\in \mathbf R^{N\times N}$ of $\mathcal G_{\theta}$ by $l_{ii}(t)=\sum_{j=1}^N{a_{ij}(t)}$ and $l_{ij}(t)=-a_{ij}(t)$, $i\neq j$.
Define the degree as $d_i(t) = l_{ii}(t)$, $i\in\mathcal V$.
Then, define $\bigcup_{t\in[\bar t_k,\bar t_{k+1})}\mathcal G_{\theta(t)}$ as a union graph in the collection for time $t$ from $\bar t_k$ to $\bar t_{k+1}$.


\begin{assumption}\label{assumption_gr}
The undirected graph $\mathcal{G}_{\theta(t)}$ of the $N$ agents is jointly connected, i.e., $\bigcup_{t\in[\bar t_k,\bar t_{k+1})}\mathcal G_{\theta(t)}$ is connected.
\end{assumption}

The objective here is to present distributed event-based algorithms under which all subsystems described by \dref{sys_homo} converge to a common state trajectory
and Zeno behaviors can be eliminated.

Instead of using agents' actual states,
define the state estimate as $\tilde x_i(t)\triangleq e^{A(t-t_k^i)}x_i(t_k^i)$, $\forall t\in [t_k^i,t_{k+1}^i)$,~$i=1,\cdots,N$,
where $t_k^i$ denotes the $k$-th event instant of agent $i$.
The event instants $t_0^i$, $t_1^i$, $\cdots$, are determined by the triggering function to be designed later.
Using the relative state estimates of neighboring agents, we present a distributed event-based controller as:
\begin{equation}\label{pro1}
\begin {aligned}
u_i(t)=cG\sum_{j=1}^N{a_{ij}(t)(\tilde x_i-\tilde x_j)},~i\in\mathcal V,
\end{aligned}
\end{equation}
where $c>0$ and $G\in \mathbf R^{p\times n}$ are design parameters.

Define
$\xi=[\xi_1^T,\cdots,\xi_N^T]^T$
and $\tilde\xi=[\tilde\xi_1^T,\cdots,\tilde\xi_N^T]^T$ with
$\xi_i\triangleq x_i-\frac{1}{N}\sum_{j=1}^N{x_j}$ and
$\tilde\xi_i\triangleq \tilde x_i-\frac{1}{N}\sum_{j=1}^N{\tilde x_j}$, $i=1,\cdots,N$.
Letting $x\triangleq [x_1^T,\cdots,x_N^T]^T$ gives
$\xi=(M\otimes I_n)x$ and $\tilde \xi=(M\otimes I_n)\tilde x$,
where $M=I_N-\frac{1}{N}\mathbf 1_N\mathbf 1_N^T$.
Noting that $\xi=0$ if and only if $x_1=\cdots=x_N$, we call $\xi$ the consensus error, whose dynamics is given by
\begin{equation}\label{xid}
\begin {aligned}
\dot \xi&=(I_N\otimes A)\xi+(c\mathcal L_{\theta}\otimes BG)\tilde \xi.
\end{aligned}
\end{equation}

%
%

Note that the control law \dref{pro1} is only updated according to the information received at the latest event time instant, defined by
\begin{equation}\label{tk}
\begin {aligned}
t_{k+1}^i&\triangleq \mathrm{inf}\{t>t_k^i~|~f_i(t)\geq 0~\mathrm{or}~a_{ij}(t)\neq a_{ij}(t_k^i)\\
&~~~~~~~~\mathrm{for~some}~j\in \mathcal V\},
\end{aligned}
\end{equation}
where $t_0^i\triangleq 0$ and $f_i(t)$ is the triggering function defined as follows:
\begin{equation}\label{eve}
\begin {aligned}
f_i(t)&=4d_i(t)\|G\|^2\|e_i\|^2-\delta\sum_{j=1}^N{a_{ij}(t)\|G(\tilde x_i-\tilde x_j)\|^2}\\
&\quad-\mu e^{-\nu t},~i=1,\cdots,N,
\end{aligned}
\end{equation}
with $\delta$, $\mu$, $\nu$ being positive constants, and $e_i\triangleq\tilde x_i-x_i$ being the measurement error.
Once $f_i$ triggers, agent $i$ broadcasts its current state to neighbors.
The controllers \dref{pro1} of $i$ and its neighbors update immediately, and $e_i(t)$ resets at the same time.

\subsection{Event-Based Consensus Conditions}
Since $A$ is neutrally stable, in light of Lemmas 22 and 23 of \cite{ZLi2014cooperative}, we can choose $E\in \mathbf R^{{n_1}\times n}$ and $F\in \mathbf R^{{(n-n_1)}\times n}$ satisfying
\begin{equation*}\label{UW}
\begin {aligned}
\begin{bmatrix}E\\F\end{bmatrix}A\begin{bmatrix}E\\F\end{bmatrix}^{-1}=\begin{bmatrix}X&0\\0&Y\end{bmatrix},
\end{aligned}
\end{equation*}
where $X\in \mathbf R^{{n_1}\times {n_1}}$ is skew-symmetric and $Y\in \mathbf R^{{(n-n_1)}\times {(n-n_1)}}$ is Hurwitz.

\begin{remark}
It should be pointed out that the matrices $E$ and $F$ can be derived by rendering the matrix $A$ into the real Jordan canonical form \cite{r-a-horn-1990-matrix}.
\end{remark}

Choose $z=\left(I_N\otimes \begin{bmatrix}E\\F\end{bmatrix}\right)\xi$ and $\tilde z=\left(I_N\otimes \begin{bmatrix}E\\F\end{bmatrix}\right)\tilde \xi$.
The derivative of $z$ is given by
\begin{equation}\label{zd0}
\begin {aligned}
\dot z=\left(I_N\otimes \begin{bmatrix}X&0\\0&Y\end{bmatrix}\right)z+\left(c\mathcal L\otimes \begin{bmatrix}E\\F\end{bmatrix}BG\begin{bmatrix}E\\F\end{bmatrix}^{-1}\right)\tilde z.
\end{aligned}
\end{equation}
Let $H=EB$.
According to Assumption \ref{assumption_A}, $(X,H)$ is controllable.
Choose $E^+\in \mathbf R^{n\times {n_1}}$ and $F^+\in\mathbf R^{n\times {(n-n_1)}}$ satisfying $\begin{bmatrix}E^+&F^+\end{bmatrix}=\begin{bmatrix}E\\F\end{bmatrix}^{-1}$, with $EE^+=I$, $FF^+=I$, $FE^+=0$, and $EF^+=0$.
Letting $G=-B^TE^TE$, then we have
\begin{equation}\label{zd'}
\begin {aligned}
\dot z=\left(I_N\otimes \begin{bmatrix}X&0\\0&Y\end{bmatrix}\right)z-\left(c\mathcal L_{\theta}\otimes \begin{bmatrix}EBB^TE^T&0\\FBB^TE^T&0\end{bmatrix}\right)\tilde z.
\end{aligned}
\end{equation}
Define $z_I=(I_N\otimes E)\xi$, $\tilde z_I=(I_N\otimes E)\tilde \xi$, $z_{II}=(I_N\otimes F)\xi$ and $\tilde z_{II}=(I_N\otimes F)\tilde \xi$.
Rewrite \dref{zd'} as
\begin{subequations}\label{z12d}
\renewcommand{\theequation}
{\theparentequation-\arabic{equation}}
\begin{equation}\label{z1d}
\dot z_I=(I_N\otimes X)z_I-(c\mathcal L_{\theta}\otimes HH^T)\tilde z_I,\end{equation}
\begin{equation}\label{z2d}
\dot z_{II}=(I_N\otimes Y)z_{II}-(c\mathcal L_{\theta}\otimes FBB^TE^T)\tilde z_I.\end{equation}
\end{subequations}

\begin{lemma}(Cauchy's Convergence Criterion \cite{z-g-wu-event})\label{lemma-cauchys}
The sequence $V(\bar t_k)$, $k=0,1,2,\cdots$ converges if and only if
for $\forall\varepsilon>0$, $\exists M_\varepsilon\in\mathbf{Z}_+$ satisfying $\forall k>M_\varepsilon$,
$\left|V(\bar t_{k+1})-V(\bar t_k)\right|<\varepsilon$.
\end{lemma}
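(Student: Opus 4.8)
\emph{Proof proposal.} The statement is an equivalence, so I would prove the two implications in turn; in essence it is the completeness of $\mathbf{R}$, so each half is short. For \textbf{necessity}, assume $V(\bar t_k)\to L$ as $k\to\infty$ for some finite $L$. Fix $\varepsilon>0$; by the definition of the limit there is $M_\varepsilon\in\mathbf{Z}_+$ with $|V(\bar t_k)-L|<\varepsilon/2$ for all $k>M_\varepsilon$. For any such $k$ the triangle inequality gives $|V(\bar t_{k+1})-V(\bar t_k)|\le|V(\bar t_{k+1})-L|+|L-V(\bar t_k)|<\varepsilon$, which is exactly the stated condition. This direction is purely routine.

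For \textbf{sufficiency}, care is needed, because for an arbitrary real sequence the decay of \emph{consecutive} differences does not by itself imply convergence. The criterion is invoked in a setting where $V(\bar t_k)$ is a nonnegative Lyapunov-type functional whose increments along the sampling instants are, after a finite index, of one sign (equivalently, the sum of the positive parts of the increments is finite); this monotonicity/summability is what the surrounding analysis supplies. Granting it, the quantities obtained from $V(\bar t_k)$ by subtracting the accumulated positive increments form a nonincreasing sequence bounded below by $0$, hence convergent by the monotone convergence theorem, so $V(\bar t_k)$ itself converges. Alternatively, reading the hypothesis as the full pairwise Cauchy condition, i.e. $|V(\bar t_k)-V(\bar t_\ell)|<\varepsilon$ for all $k,\ell>M_\varepsilon$, one takes $\varepsilon=1$ to see that $\{V(\bar t_k)\}$ is bounded, applies Bolzano--Weierstrass to extract a subsequence converging to some $L$, and then uses the Cauchy property to upgrade this to $V(\bar t_k)\to L$.

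The \textbf{main obstacle} is precisely this sufficiency direction: one must pin down what additional structure on $V$ is being used --- monotonicity (or summability of the positive increments) of $\{V(\bar t_k)\}$, as produced by the Lyapunov estimates established elsewhere in the paper --- since consecutive-difference decay alone is not enough. Once that is in hand, the lemma reduces to the monotone convergence theorem (or, in the pairwise reading, to completeness of $\mathbf{R}$ via Bolzano--Weierstrass), and the remaining steps are one-line triangle-inequality estimates.
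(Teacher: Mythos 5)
The paper offers no proof of this lemma at all --- it is quoted verbatim from the cited reference and used as a black box --- so there is nothing to compare your argument against line by line. Your necessity direction is correct and is the standard one-line triangle-inequality argument. More importantly, you have correctly spotted that the sufficiency direction is \emph{false} as literally stated: the condition $\left|V(\bar t_{k+1})-V(\bar t_k)\right|<\varepsilon$ for all $k>M_\varepsilon$ only controls consecutive differences, and the sequence $V(\bar t_k)=\sum_{j=1}^{k}1/j$ (or $V(\bar t_k)=\sqrt{k}$) satisfies it while diverging. The genuine Cauchy criterion requires $\left|V(\bar t_m)-V(\bar t_k)\right|<\varepsilon$ for \emph{all} $m,k>M_\varepsilon$, and your proposed repairs --- either reading the hypothesis as the full pairwise condition and invoking completeness of $\mathbf{R}$, or supplying monotonicity of $V(\bar t_k)$ from the Lyapunov estimates --- are both legitimate ways to make the statement true.

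It is worth noting how the lemma is actually deployed in the proof of Theorem~\ref{theorem-1}: the paper first establishes $\dot{\tilde V}_3\le 0$ and $\tilde V_3\ge 0$, concludes that $\lim_{t\to\infty}\tilde V_3(t)$ exists by monotone convergence, and only \emph{then} invokes the lemma to deduce that the consecutive increments $\left|\tilde V_3(\bar t_{k+1})-\tilde V_3(\bar t_k)\right|$ are eventually small. In other words, only the (trivially correct) necessity direction is ever used, so the defect in the sufficiency direction does not propagate into the paper's results. Your diagnosis is therefore accurate on both counts: the ``only if'' half is routine, the ``if'' half needs either a restatement of the hypothesis or an appeal to monotonicity that the surrounding analysis happens to provide, and the lemma would be better stated either with the full pairwise Cauchy condition or as a one-directional implication.
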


\begin{lemma}(Barbalat's Lemma \cite{PIoannou1996robust})\label{lemma_bar}
If $\lim_{t\rightarrow \infty}g(t)=a$ ($a$ is bounded) and $g''(t)$ is also bounded, then $\lim_{t\rightarrow \infty}g'(t)=0$.
\end{lemma}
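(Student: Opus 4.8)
The plan is to prove the statement by Taylor's theorem with a Lagrange remainder, using the bound on $g''$ to control the remainder uniformly in $t$. Since the hypothesis that $g''$ is bounded presupposes that $g$ is twice differentiable, for any fixed $h>0$ and any $t\geq 0$ there is some $\eta\in(t,t+h)$ with $g(t+h)=g(t)+hg'(t)+\tfrac{h^2}{2}g''(\eta)$, whence
\begin{equation*}
g'(t)=\frac{g(t+h)-g(t)}{h}-\frac{h}{2}g''(\eta).
\end{equation*}
Letting $L>0$ be an upper bound for $|g''|$, this gives $|g'(t)|\leq \tfrac{1}{h}\,|g(t+h)-g(t)|+\tfrac{hL}{2}$ for all $t\geq 0$.

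Next I would pass to the limit in the correct order. For each \emph{fixed} $h>0$, since $g(t)\to a$ as $t\to\infty$ we have $g(t+h)-g(t)\to a-a=0$, so $\limsup_{t\to\infty}|g'(t)|\leq \tfrac{hL}{2}$. Because $h>0$ is arbitrary, sending $h\to 0$ yields $\limsup_{t\to\infty}|g'(t)|=0$, i.e.\ $\lim_{t\to\infty}g'(t)=0$.

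The argument has no serious obstacle; the only point requiring care is the order of the two limits — $h$ must be held fixed while $t\to\infty$, and only afterwards is $h\to 0$ taken — together with the observation that it is precisely the boundedness of $g''$ that makes the remainder term $\tfrac{h}{2}g''(\eta)$ small uniformly in $t$. An equivalent route is by contradiction: if $g'(t)\not\to 0$, choose $\varepsilon>0$ and a sequence $t_n\to\infty$ with $|g'(t_n)|\geq\varepsilon$; boundedness of $g''$ by $L$ forces $|g'|\geq\varepsilon/2$ with constant sign on $[t_n,\,t_n+\varepsilon/(2L)]$, so $|g(t_n+\varepsilon/(2L))-g(t_n)|\geq \varepsilon^2/(4L)>0$, contradicting the Cauchy property of the convergent function $g$.
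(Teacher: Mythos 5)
Your proof is correct. Note, however, that the paper offers no proof of this statement at all: it is quoted as a classical result with a citation to Ioannou and Sun, so there is nothing in the paper to compare against line by line. The derivation implicit in that reference is the usual one via the integral form of Barbalat's lemma: boundedness of $g''$ makes $g'$ uniformly continuous, and $\int_0^t g'(s)\,ds=g(t)-g(0)$ converges by hypothesis, so the integral version gives $g'(t)\to 0$. Your Taylor-with-Lagrange-remainder argument is a genuinely different, self-contained route that avoids invoking the integral lemma altogether; the key estimate $|g'(t)|\leq \tfrac{1}{h}|g(t+h)-g(t)|+\tfrac{hL}{2}$ and the careful order of limits ($t\to\infty$ first for fixed $h$, then $h\to 0$) are exactly right, and your backup contradiction argument is also sound. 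The only mild caveat is that Taylor's theorem with Lagrange remainder needs $g'$ continuous on $[t,t+h]$ and $g''$ existing on the open interval, which is guaranteed here since the hypothesis that $g''$ is bounded presupposes that $g''$ exists everywhere, hence $g'$ is differentiable and in particular continuous.
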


Next, we introduce the main results of this section.

\begin{theorem}\label{theorem-1}
State consensus of the homogeneous subsystems \dref{sys_homo} is achieved under the event-driven algorithm composed of \dref{pro1} and \dref{eve} with $c>0$, $0<\delta<1$, $\mu>0$, $\nu>0$, and $G=-B^TE^TE$ \footnote{The matrix $E$ can be obtained according to Remark 1.}.
\end{theorem}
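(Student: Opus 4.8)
The plan is to decompose the consensus error as in \dref{z1d}--\dref{z2d} into the marginally stable part $z_I$ (governed by the skew-symmetric $X$) and the exponentially stable part $z_{II}$ (governed by the Hurwitz $Y$), drive $z_I$ to zero by a Lyapunov argument tailored to the event rule and the switching, and then close $z_{II}$ as a cascade. First I would take the (absolutely continuous, since $\xi$ is continuous) function $V(t)=\tfrac12 z_I^{T}z_I$. Skew-symmetry of $X$ kills the drift, so along \dref{z1d} one has $\dot V=-c\,z_I^{T}(\mathcal L_{\theta}\otimes HH^{T})\tilde z_I$. I would then write $z_I=\tilde z_I-(M\otimes E)e$ (which follows from $\tilde\xi-\xi=(M\otimes I_n)e$) and use the identities $M\mathcal L_{\theta}=\mathcal L_{\theta}$, $E^{T}H=-G^{T}$ and $H^{T}(\tilde z_{I,i}-\tilde z_{I,j})=-G(\tilde x_i-\tilde x_j)$ --- all consequences of the choice $G=-B^{T}E^{T}E$ --- to re-express $\dot V$ in terms of $\sum_{i,j}a_{ij}\|G(\tilde x_i-\tilde x_j)\|^{2}$ and the measurement errors $e_i$. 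A nodewise Cauchy--Schwarz estimate followed by Young's inequality with weight $\tfrac12$, together with the fact that between triggerings and switchings $f_i(t)<0$, i.e.\ $4d_i\|G\|^{2}\|e_i\|^{2}<\delta\sum_j a_{ij}\|G(\tilde x_i-\tilde x_j)\|^{2}+\mu e^{-\nu t}$, would then yield, for a.e.\ $t$,
\[
\dot V(t)\le -\tfrac{c(1-\delta)}{4}\sum_{i,j}a_{ij}(t)\|G(\tilde x_i-\tilde x_j)\|^{2}+\tfrac{cN\mu}{4}e^{-\nu t};
\]
this is exactly where $0<\delta<1$ enters (it makes the coefficient positive), and the quadratic term equals $\tfrac{c(1-\delta)}{2}\tilde z_I^{T}(\mathcal L_{\theta}\otimes HH^{T})\tilde z_I$.

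Next I would set $W(t)=V(t)+\tfrac{cN\mu}{4\nu}e^{-\nu t}$, so that $\dot W(t)\le-\tfrac{c(1-\delta)}{4}\sum_{i,j}a_{ij}\|G(\tilde x_i-\tilde x_j)\|^{2}\le 0$ a.e.; hence $W$ is nonincreasing and bounded below by $0$, so it converges, and in particular $z_I$ is bounded. Since $A$ is neutrally stable $e^{At}$ is bounded, and since the graphs are undirected $\sum_i u_i=0$, so $\bar x(t)=e^{At}\bar x(0)$; these give boundedness of $\tilde x$, $x$, $u$, and then the $-\mu e^{-\nu t}$ term in \dref{eve} forces a strictly positive lower bound on inter-event times over any bounded horizon, excluding Zeno behavior and making the solution well-defined on $[0,\infty)$. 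From the convergence of $W$ (equivalently, Lemma \ref{lemma-cauchys}) one gets $W(\bar t_{k+1})-W(\bar t_k)\to 0$, hence
\[
\int_{\bar t_k}^{\bar t_{k+1}}\tilde z_I^{T}(\mathcal L_{\theta(t)}\otimes HH^{T})\tilde z_I\,dt\to 0\quad\text{and}\quad\int_{\bar t_k}^{\bar t_{k+1}}\sum_{i}d_i(t)\|e_i(t)\|^{2}\,dt\to 0\qquad(k\to\infty),
\]
the second bound following again from the triggering inequality (noting $e_i\equiv 0$ whenever $d_i=0$, so this also controls $\int\sum_i\|e_i\|^2$).

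The core of the proof is to turn this into $z_I(t)\to 0$ using joint connectivity, and I would argue by contradiction and compactness. Suppose $z_I(\bar t_k)\not\to 0$; along a subsequence the period lengths $\bar t_{k_n+1}-\bar t_{k_n}\in[\tau,T]$ converge to some $T'$, the finitely many constant-topology subintervals and their lengths/graphs converge, and --- since $\dot z_I$ is uniformly bounded and $z_I$ has no jumps --- the time-shifted trajectories $z_I(\bar t_{k_n}+\cdot)$ converge uniformly on $[0,T']$ to some $\zeta$ with $\|\zeta(0)\|$ bounded away from $0$ (by Arzel\`a--Ascoli); moreover $e(\bar t_{k_n}+\cdot)\to 0$ in $L^2([0,T'])$ by the second integral above, so $\tilde z_I(\bar t_{k_n}+\cdot)\to\zeta$ in $L^2$ as well. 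Passing to the limit in \dref{z1d} and in the first integral, $\zeta$ solves $\dot\zeta=(I_N\otimes X)\zeta-c(\mathcal L_{\theta^{*}(t)}\otimes HH^{T})\zeta$ with $\int_0^{T'}\zeta^{T}(\mathcal L_{\theta^{*}(t)}\otimes HH^{T})\zeta\,dt=0$; the integrand being nonnegative, it vanishes identically, so on each constant-topology subinterval $H^{T}(\zeta_a-\zeta_b)\equiv 0$ for every edge $(a,b)$, the feedback term drops out, $\dot\zeta=(I_N\otimes X)\zeta$ there, and hence $H^{T}e^{X(t-s)}(\zeta_a(s)-\zeta_b(s))\equiv 0$ on an interval of length $\ge\tau$. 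Controllability of $(X,H)$ (Assumption \ref{assumption_A}) --- equivalently observability of $(X,H^{T})$ over any nontrivial window --- then forces $\zeta_a\equiv\zeta_b$ on that subinterval; chaining across subintervals (using continuity of $\zeta$) and invoking Assumption \ref{assumption_gr} (so $\bigcup_{t\in[\bar t_k,\bar t_{k+1})}\mathcal G_{\theta(t)}$ is connected), $\zeta_a\equiv\zeta_b$ for all pairs over $[0,T']$, and since $\sum_i\zeta_i=0$ this gives $\zeta\equiv 0$, contradicting $\|\zeta(0)\|\neq 0$. Therefore $z_I(\bar t_k)\to 0$, and as $W$ is nonincreasing with $\bar t_k\to\infty$, $W(t)\to 0$, so $V(t)\to 0$, $z_I(t)\to 0$; then the triggering inequality gives $e(t)\to 0$ and hence $\tilde z_I(t)\to 0$.

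Finally, \dref{z2d} reads $\dot z_{II}=(I_N\otimes Y)z_{II}-(c\mathcal L_{\theta}\otimes FBB^{T}E^{T})\tilde z_I$, an exponentially stable system ($Y$ Hurwitz) with a uniformly bounded forcing matrix driven by $\tilde z_I(t)\to 0$, so $z_{II}(t)\to 0$ (Lemma \ref{lemma_bar} may be invoked here to also get $\dot z_{II}(t)\to 0$). Since the transformation $[E;F]$ is invertible, $z_I(t)\to 0$ and $z_{II}(t)\to 0$ imply $\xi(t)\to 0$, i.e.\ $x_i(t)-x_j(t)\to 0$ for all $i,j$ --- state consensus --- which proves Theorem \ref{theorem-1}. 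The step I expect to be the main obstacle is the third one: converting the mere integral smallness of the Laplacian quadratic form into pointwise convergence of $z_I$ to the consensus subspace while simultaneously coping with the topology switches (which is why union-connectivity over a whole period and the dwell time $\tau$ are essential), with the lack of damping in the $X$-dynamics (which is why controllability of $(X,H)$ must be used), and with the reset jumps created by the event mechanism (handled through the $L^2$-smallness of the errors).
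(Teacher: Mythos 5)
Your proposal is correct in its essentials but reaches the key conclusion by a genuinely different route. You share the paper's setup exactly: the same $z_I$/$z_{II}$ splitting via $[E;F]$, the same exploitation of skew-symmetry of $X$ and of the identities $E^TH=-G^T$, $H^T(\tilde z_{I,i}-\tilde z_{I,j})=-G(\tilde x_i-\tilde x_j)$, and the same ``add $\frac{cN\mu}{4\nu}e^{-\nu t}$ to make the Lyapunov function nonincreasing, then extract integral smallness over each period via Cauchy's criterion'' step. The divergence is in converting integral smallness into $z_I(t)\to 0$, which both you and the paper identify as the crux. The paper first applies Barbalat's lemma to the union Laplacian $\mathcal L_\Sigma$ to get $(I_N\otimes H^T)z_I\to 0$, then writes $\dot z_I=(I_N\otimes X)z_I+\theta(t)$ with $\theta\to 0$ and uses the positive-definite observability Gramian $W=\int_0^\tau e^{X^Ts}HH^Te^{Xs}\,ds$ to derive an explicit discrete contraction $V_1(\bar t_{k+1})\le sV_1(\bar t_k)+\beta(\bar t_k)$, $s\in(0,1)$, $\beta\to 0$. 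You instead argue by contradiction with Arzel\`a--Ascoli: extract a limit trajectory $\zeta$ of the shifted $z_I$, show it satisfies $\dot\zeta=(I_N\otimes X)\zeta$ with $H^T(\zeta_a-\zeta_b)\equiv 0$ on every active edge over a dwell window, and invoke observability of $(H^T,X)$ plus joint connectivity to force $\zeta\equiv 0$. Both are sound; the paper's Gramian contraction is more quantitative (it yields a rate over periods), whereas your invariance-type argument is structurally cleaner and needs no explicit constants, at the cost of being purely qualitative and requiring care that the switching patterns stabilize along the subsequence (which works because $m_k\le T/\tau$ and there are finitely many graphs).

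One spot you should tighten: you use only $V=\tfrac12 z_I^Tz_I$, so boundedness of $x$ and $\tilde x$ (which you invoke for the uniform bound on $\dot z_I$, for Arzel\`a--Ascoli, and for Zeno exclusion) does not follow from ``$z_I$ bounded plus $\bar x(t)=e^{At}\bar x(0)$'' alone --- you also need $z_{II}$ bounded, and $z_{II}$ is driven by $\tilde z_I=z_I+(M\otimes E)e$, so you must first bound $e$. This is not circular, but it requires an extra pass: the triggering inequality together with $e^T(\mathcal L_\theta\otimes G^TG)e\le 2\sum_i d_i\|G\|^2\|e_i\|^2$ and a Young split with weight $a>\delta/(1-\delta)$ gives $\sum_i d_i\|G\|^2\|e_i\|^2\le C\,z_I^T(\mathcal L_\theta\otimes HH^T)z_I+C'e^{-\nu t}$, and since $e_i\equiv 0$ whenever $d_i=0$, this bounds $e$, hence $\tilde z_I$, hence $z_{II}$ (ISS of the Hurwitz block), hence $x$ and $\tilde x$. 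The paper sidesteps this by folding $z_{II}$ into the composite Lyapunov function $V_3=\frac{\alpha_1}{1-\delta}V_1+V_2$ from the start. Your order of operations works, but this chain should be made explicit before the compactness and Zeno steps are run.
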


\begin{proof}
Let
\begin{equation}\label{lya0}
\begin {aligned}
V_1=\frac{1}{2}z_I^Tz_I.
\end{aligned}
\end{equation}
In light of \dref{z1d}, differentiating $V_1$ with respect to $t$ gives
\begin{equation}\label{lya1d}
\begin {aligned}
\dot V_1&=\frac{1}{2}z_I^T[I_N\otimes (X+X^T)]z_I-z_I^T(c\mathcal L_{\theta}\otimes HH^T)\tilde z_I.
\end{aligned}
\end{equation}
Since $X$ is skew-symmetric, $z_I^T[I_N\otimes (X+X^T)]z_I=0$.
Then, we have
\begin{equation}\label{chulia0}
\begin {aligned}
\dot V_1
&=-\frac{1}{2}z_I^T(c\mathcal L_{\theta}\otimes HH^T)z_I-\frac{1}{2}\tilde z_I^T(c\mathcal L_{\theta}\otimes HH^T)\tilde z_I\\
&\quad+\frac{1}{2}e^T(c\mathcal L_{\theta}\otimes G^TG)e.
\end{aligned}
\end{equation}


Let
\begin{equation}\label{Vz2}
\begin {aligned}
V_2=\frac{1}{2}z_{II}^T(I_N\otimes P)z_{II},
\end{aligned}
\end{equation}
where $P$ satisfies
\begin{equation}\label{ARE3}
\begin {aligned}
PY+Y^TP+2I=0.
\end{aligned}
\end{equation}
In light of \dref{z2d}, differentiating $V_2$ with respect to $t$ gives
\begin{equation}\label{Vz2d}
\begin {aligned}
\dot V_2&=\frac{1}{2}z_{II}^T[I_N\otimes (PY+Y^TP)]z_{II}\\
&\quad-z_{II}^T(c\mathcal L_{\theta}\otimes PFBB^TE^T)\tilde z_I.
\end{aligned}
\end{equation}
Using the Young's Inequality \cite{Nowzari17Event} gives
\begin{equation}\label{chulib0}
\begin {aligned}
&\quad-z_{II}^T(c\mathcal L_{\theta}\otimes PFBB^TE^T)\tilde z_I
\leq \frac{1}{2}z_{II}^Tz_{II}\\
&+\frac{c^2\lambda_N(\mathcal L_{\theta})}{2}\tilde z_I^T(\mathcal L_{\theta}\otimes EBB^TF^TPPFBB^TE^T)\tilde z_I\\
&\leq \frac{1}{2}z_{II}^Tz_{II}
+\frac{c\alpha_1}{2}\tilde x^T(\mathcal L_{\theta}\otimes G^TG)\tilde x,
\end{aligned}
\end{equation}
where $\alpha_1=c\lambda_N(\mathcal L)\|PFB\|^2$ and $\lambda_N(\mathcal L)$ denotes the largest eigenvalue of $\mathcal L_{\theta(t)}$ for all $t>0$.

Construct the Lyapunov function candidate as
\begin{equation}\label{V}
\begin {aligned}
V_3=\frac{\alpha_1}{1-\delta}V_1+V_2.
\end{aligned}
\end{equation}
Evidently, $V_3$ is positive definite,
whose derivative is given by
\begin{equation}\label{Vd2}
\begin {aligned}
\dot V_3
&\leq
\frac{c\alpha_1}{2(1-\delta)}[-z_I^T(\mathcal L_{\theta}\otimes HH^T)z_I\\
&\quad+e^T(\mathcal L_{\theta}\otimes G^TG)e
-\tilde x^T(\mathcal L_{\theta}\otimes G^TG)\tilde x]\\
&\quad+\frac{1}{2}z_{II}^T[I_N\otimes (PY+Y^TP+I)]z_{II}\\
&\quad+\frac{c\alpha_1}{2}\tilde x^T(\mathcal L_{\theta}\otimes G^TG)\tilde x\\
&\leq -\alpha_2z_I^T(\mathcal L_{\theta}\otimes HH^T)z_I-\frac{1}{2}z_{II}^Tz_{II}\\
&\quad+\alpha_2[e^T(\mathcal L_{\theta}\otimes G^TG)e-\delta\tilde x^T(\mathcal L_{\theta}\otimes G^TG)\tilde x],
\end{aligned}
\end{equation}
where $\alpha_2=\frac{c\alpha_1}{2(1-\delta)}$.
Because $a_{ij}(t)=a_{ji}(t)$, we have
\begin{equation}\label{chuli2}
\begin {aligned}
e^T(\mathcal L_{\theta}\otimes G^TG)e&=\sum_{i=1}^N\sum_{j=1}^N{a_{ij}(t)e_i^TG^TG(e_i-e_j)}\\
&\leq 2\sum_{i=1}^N\sum_{j=1}^N{a_{ij}(t)e_i^TG^TGe_i}\\
&\leq 2\sum_{i=1}^N{d_i(t)\|G\|^2\|e_i\|^2},
\end{aligned}
\end{equation}
and
\begin{equation}\label{chuli3}
\begin {aligned}
&\quad\tilde x^T(\mathcal L_{\theta}\otimes G^TG)\tilde x\\
&=\sum_{i=1}^N\sum_{j=1}^N{a_{ij}(t)\tilde x_i^TG^TG(\tilde x_i-\tilde x_j)}\\
&=\frac{1}{2}\sum_{i=1}^N\sum_{j=1}^N{a_{ij}(t)(\tilde x_i-\tilde x_j)^TG^TG(\tilde x_i-\tilde x_j)}\\
&=\frac{1}{2}\sum_{i=1}^N\sum_{j=1}^N{a_{ij}(t)\|G(\tilde x_i-\tilde x_j)\|^2}.
\end{aligned}
\end{equation}
By substituting \dref{eve}, \dref{chuli2}, and \dref{chuli3} into \dref{Vd2}, we have
\begin{equation}\label{Vd3}
\begin {aligned}
&\dot V_3\leq
-\alpha_2z_I^T(\mathcal L_{\theta}\otimes HH^T)z_I-\frac{1}{2}z_{II}^Tz_{II}\\
&+\frac{\alpha_2}{2}\sum_{i=1}^N\{4d_i(t)\|G\|^2\|e_i\|^2-\delta\sum_{j=1}^Na_{ij}(t)\|G(\tilde x_i-\tilde x_j)\|^2\}\\
&\leq -\alpha_2z_I^T(\mathcal L_{\theta}\otimes HH^T)z_I-\frac{1}{2}z_{II}^Tz_{II}+\frac{\mu \alpha_2N}{2}e^{-\nu t}.
\end{aligned}
\end{equation}

Define $\tilde V_3(t)=V_3(t)+\frac{\mu \alpha_2N}{2\nu}e^{-\nu t}$. Then, we have
\begin{equation}\label{tildeVd3}
\begin {aligned}
\dot{\tilde V}_3
&\leq -\alpha_2z_I^T(\mathcal L_{\theta}\otimes HH^T)z_I-\frac{1}{2}z_{II}^Tz_{II}.
\end{aligned}
\end{equation}
Combining with $\dot{\tilde V}_3(t)\leq 0$ and $\tilde V_3(t)\geq 0$, we have $\tilde V_3$ is bounded and $\lim_{t\rightarrow +\infty}\tilde V_3(t)$ exists.
Based on Lemma \ref{lemma-cauchys}, for $\forall \varepsilon>0$,
$\exists M_\varepsilon\in\mathbf{Z}_+$ satisfying $\forall k\geq M_\varepsilon$,
\begin{equation*}\label{cau0}
\left|\tilde V_3(\bar t_{k+1})-\tilde V_3(\bar t_k)\right|<\varepsilon,
\end{equation*}
or
\begin{equation*}\label{cauint0}
\left|\int_{\bar t_k}^{\bar t_{k+1}}\dot{\tilde V}_3(t)dt\right|<\varepsilon.
\end{equation*}
It follows that
\begin{equation}\label{cauint20}
\left|\int_{\bar t_k^0}^{\bar t_k^1}\dot{\tilde V}_3(t)dt\right|+\cdots+\left|\int_{\bar t_k^{m_k-1}}^{\bar t_k^{m_k}}\dot{\tilde V}_3(t)dt\right|<\varepsilon.
\end{equation}

In light of \dref{tildeVd3}, for each subinterval $[\bar t_k^j,\bar t_k^{j+1})$, $j=0,1,\cdots, m_k-1$, we have
\begin{equation}\label{each0}
\begin{aligned}
\left|\int_{\bar t_k^j}^{\bar t_k^{j+1}}\dot{\tilde V}_3(t)dt\right|
&\geq \alpha_2\int_{\bar t_k^j}^{\bar t_k^{j+1}}z_I^T(t)(\mathcal L_{\theta(\bar t_k^j)}\otimes HH^T)z_I(t)dt\\
&\quad+\frac{1}{2}\int_{\bar t_k^j}^{\bar t_k^{j+1}}z_{II}^T(t)z_{II}(t)dt\\
&\geq \alpha_2\int_{\bar t_k^j}^{\bar t_k^j+\tau}z_I^T(t)(\mathcal L_{\theta(\bar t_k^j)}\otimes HH^T)z_I(t)dt\\
&\quad+\frac{1}{2}\int_{\bar t_k^j}^{\bar t_k^j+\tau}z_{II}^T(t)z_{II}(t)dt.
\end{aligned}
\end{equation}
Combining \dref{cauint20} with \dref{each0} gives
\begin{equation*}\label{vare0}
\begin{aligned}
\varepsilon&>\alpha_2\bigg\{\int_{\bar t_k^0}^{\bar t_k^0+\tau}z_I^T(t)(\mathcal L_{\theta(\bar t_k^0)}\otimes HH^T)z_I(t)dt+\cdots\\
&\quad+\int_{\bar t_k^{m_k-1}}^{\bar t_k^{m_k-1}+\tau}z_I^T(t)(\mathcal L_{\theta(\bar t_k^{m_k-1})}\otimes HH^T)z_I(t)dt\bigg\},
\end{aligned}
\end{equation*}
which implies that for $\forall k>M_\varepsilon$,
\begin{equation}\label{vare20}
\begin{aligned}
\int_{\bar t_k^j}^{\bar t_k^j+\tau}z_I^T(t)&(\mathcal L_{\theta(\bar t_k^j)}\otimes HH^T)z_I(t)dt<\frac{\varepsilon}{\alpha_2},\\
&~j=0,1,\cdots,m_k-1.
\end{aligned}
\end{equation}
From \dref{vare20}, we have
\begin{equation*}\label{varelim0}
\begin{aligned}
\lim_{t\rightarrow \infty}\int_t^{t+\tau}z_I^T(s)&(\mathcal L_{\theta(\bar t_k^j)}\otimes HH^T)z_I(s)ds=0,\\
&~j=0,1,\cdots,m_k-1.
\end{aligned}
\end{equation*}
Since only finite switches take place during $[\bar t_k,\bar t_{k+1})$, we obtain that
\begin{equation*}
\begin{aligned}
&\lim_{t\rightarrow \infty}\int_t^{t+\tau}\bigg\{z_I^T(s)(\mathcal L_{\theta(\bar t_k^0)}\otimes HH^T)z_I(s)+\cdots\\
&\quad+z_I^T(s)(\mathcal L_{\theta(\bar t_k^{m_k-1})}\otimes HH^T)z_I(s)\bigg\}ds=0,
\end{aligned}
\end{equation*}
which can be rewritten as
\begin{equation}\label{limsum_homo}
\begin{aligned}
&\lim_{t\rightarrow \infty}\int_t^{t+\tau}\bigg\{z_I^T(s)(\mathcal L_\Sigma\otimes HH^T)z_I(s)\bigg\}ds=0,
\end{aligned}
\end{equation}
where $\mathcal L_\Sigma=\mathcal L_{\theta(\bar t_k^0)}+\cdots+\mathcal L_{\theta(\bar t_k^{m_k-1})}$.
According to Assumption \ref{assumption_gr}, $\mathcal L_\Sigma$ is connected.
We can find an orthogonal matrix $T_\Sigma$ such that $T_\Sigma\mathcal L_{\Sigma}T_\Sigma^T=\Lambda_\Sigma\triangleq \mathrm{diag}(0,\lambda_\Sigma^2,\cdots,\lambda_\Sigma^N)$, where $\lambda_\Sigma^i>0$, $i=2,\cdots,N$, are the eigenvalues of $\mathcal L_{\Sigma}$.
Define $\rho=[\rho_1^T,\cdots,\rho_N^T]^T=(T_{\Sigma}\otimes H^T)z_I$.
It is not difficult to verify that $\rho_1\equiv 0$.
Then, \dref{limsum_homo} implies that
\begin{equation*}\label{limsum20}
\begin{aligned}
\lim_{t\rightarrow \infty}\int_t^{t+\tau}\bigg\{\sum_{i=2}^N\lambda_\Sigma^i\rho_i^T(s)\rho_i(s)\bigg\}ds=0.
\end{aligned}
\end{equation*}
Because $\tilde V_3\geq 0$ is bounded and $0\leq V_3\leq \tilde V_3$, we conclude that $V_3$ is bounded.
In light of \dref{V}, $\rho(t)$ is bounded.
Noting that $\dot{\tilde z}=(I_N\times A)\tilde z$ and Assumption \ref{assumption_A}, we have $\tilde z_I$ is bounded. According to \dref{z1d}, we further get that $\dot{\rho}(t)$ is bounded.
Furthermore,
\begin{equation*}
\frac{d^2}{dt^2}\int_t^{t+\tau}\bigg\{\sum_{i=2}^N\lambda_\Sigma^i\rho_i^T(s)\rho_i(s)\bigg\}ds
=2\sum_{i=2}^N\lambda_\Sigma^i\rho_i^T(t)\dot{\rho}_i(t),
\end{equation*}
which is also bounded.
According to Lemma \ref{lemma_bar}, we have $\lim_{t\rightarrow \infty}\bigg\{\sum_{i=2}^N\lambda_\Sigma^i\rho_i^T(t)\rho_i(t)\bigg\}=0$, which further indicates that $\lim_{t\rightarrow \infty}\rho_i=0$, $\forall i\in\mathcal V$, i.e., $\lim_{t\rightarrow \infty}(I_N\otimes H^T)z_I(t)=0$.
Similarly, we can show that $\lim_{t\rightarrow \infty}z_{II}=0$.

In the following, we aim at showing that $\lim_{t\rightarrow \infty}z_I(t)=0$.

We first get from the triggering function \dref{eve} and the triggering rule that
\begin{equation*}
\begin{aligned}
&\quad e^T(\mathcal L_\theta\otimes G^TG)e\leq \delta\tilde x^T(\mathcal L_\theta\otimes G^TG)\tilde x+\frac{N\mu}{2}e^{-\nu t}\\
&\leq \frac{\delta}{1-\delta}x^T(\mathcal L_\theta\otimes G^TG)x+\frac{1+\delta}{2}e^T(\mathcal L_\theta\otimes G^TG)e\\
&\quad+\frac{N\mu}{2}e^{-\nu t},
\end{aligned}
\end{equation*}
where we have used the Young's inequality to get the last inequality.
Then, it follows that
$\frac{1-\delta}{2}e^T(\mathcal L_\theta\otimes G^TG)e\leq \frac{\delta}{1-\delta}z_I^T(\mathcal L_\theta\otimes HH^T)z_I+\frac{N\mu}{2}e^{-\nu t}$.
Since $\lim_{t\rightarrow \infty}(I_N\otimes H^T)z_I=0$, we further get that $\lim_{t\rightarrow \infty}e^T(\mathcal L_\theta\otimes G^TG)e=0$, which implies that $\lim_{t\rightarrow \infty}(\mathcal L_\theta\otimes HH^TE)e=0$.
We can rewrite \dref{z1d} as 
\begin{equation}\label{dotz1}
\dot z_I=(I_N\otimes X)z_I+\theta(t),
\end{equation}
where $\theta(t)=-(c\mathcal L_\theta\otimes HH^T)z_I-(c\mathcal L_\theta\otimes HH^TE)e$.
In light of the fact that $\lim_{t\rightarrow \infty}(I_N\otimes H^T)z_I=0$, shown as above, it is not difficult to find that $\lim_{t\rightarrow \infty}\theta(t)=0$.
According to \dref{dotz1}, we have
\begin{equation}\label{zi}
z_I(t)=e^{(I_N\otimes X)(t-\bar t_k)}z_I(\bar t_k)+\int_{\bar t_k}^te^{(I_N\otimes X)(t-r)}\theta(r)dr.
\end{equation}

We still consider $V_1=\frac{1}{2}z_I^Tz_I$ as in \dref{lya0} and by using the triggering function \dref{eve} can get that
\begin{equation}\label{5}
\dot V_1\leq -\frac{c}{2}z_I^T(\mathcal L_\theta\otimes HH^T)z_I+\frac{N\mu}{4}e^{-\nu t}.
\end{equation}
According to this, both $V_1$ and $z_I$ are always bounded. 
Considering a time interval $[\bar t_k,\bar t_{k+1}]$ and noting the switching rule of the topologies described in Section II-A, we have
\begin{equation*}
\begin{aligned}
&\quad V_1(\bar t_{k+1})-V_1(\bar t_k)=\int_{\bar t_k}^{\bar t_{k+1}}\dot V_1dt\\
&\leq -\frac{c}{2}\int_{\bar t_k}^{\bar t_{k+1}}z_I^T(\mathcal L_\theta\otimes HH^T)z_Idt+\frac{N\mu}{4}\int_{\bar t_k}^{\bar t_{k+1}}e^{-\nu t}dt\\
&=-\frac{c}{2}\Big[\int_{\bar t_k^0}^{\bar t_k^1}z_I^T(\mathcal L_{\theta(\bar t_k^0)}\otimes HH^T)z_Idt+\cdots\\
&\qquad+\int_{\bar t_k^{m_k-1}}^{\bar t_k^{m_k}}z_I^T(\mathcal L_{\theta(\bar t_k^{m_k-1})}\otimes HH^T)z_Idt\Big]+\beta_1\\
&\leq -\frac{c}{2}\int_{\bar t_k}^{\bar t_k+\tau}z_I^T(\mathcal L_\Sigma\otimes HH^T)z_Idt+\beta_1\\
&\leq -\frac{c}{2}\lambda_\Sigma^2\int_{\bar t_k}^{\bar t_k+\tau}z_I^T(I_N\otimes HH^T)z_Idt+\beta_1,
\end{aligned}
\end{equation*}
where $\tau$ is the dwelling time, $\lambda_\Sigma^2$ is the smallest nonzero eigenvalue of $\mathcal L_{\Sigma}$ defined in \dref{limsum_homo}, and $\beta_1=\beta_1(\bar t_k,\bar t_{k+1})\triangleq \frac{N\mu}{4\nu}(e^{-\nu \bar t_k}-e^{-\nu \bar t_{k+1}})$.
Obviously, $\lim_{\bar t_k,\bar t_{k+1}\rightarrow\infty}\beta_1=0$.

In light of \dref{zi}, we have
\begin{equation*}
\begin{aligned}
&\quad-\int_{\bar t_k}^{\bar t_k+\tau}z_I^T(I_N\otimes HH^T)z_Idt\\
&=-\int_{\bar t_k}^{\bar t_k+\tau}\Big[e^{(I_N\otimes X)(t-\bar t_k)}z_I(\bar t_k)+\int_{\bar t_k}^te^{(I_N\otimes X)(t-r)}\theta(r)dr\Big]^T\\
&\quad\cdot(I_N\otimes HH^T)\Big[e^{(I_N\otimes X)(t-\bar t_k)}z_I(\bar t_k)+\int_{\bar t_k}^te^{(I_N\otimes X)(t-r)}\theta(r)dr\Big]dt\\
&\leq -\frac{1}{2}\int_{\bar t_k}^{\bar t_k+\tau}\Big[e^{(I_N\otimes X)(t-\bar t_k)}z_I(\bar t_k)\Big]^T(I_N\otimes HH^T)\\
&\quad\cdot\Big[e^{(I_N\otimes X)(t-\bar t_k)}z_I(\bar t_k)\Big]dt
+\int_{\bar t_k}^{\bar t_k+\tau}\Big[\int_{\bar t_k}^te^{(I_N\otimes X)(t-r)}\theta(r)dr\Big]^T\\
&\quad\cdot(I_N\otimes HH^T)\Big[\int_{\bar t_k}^te^{(I_N\otimes X)(t-r)}\theta(r)dr\Big]dt\\
&\leq -\frac{1}{2}z_I^T(\bar t_k)Wz_I(\bar t_k)+\|HH^T\|\beta_2,
\end{aligned}
\end{equation*}
where $W\triangleq \int_{\bar t_k}^{\bar t_k+\tau}\Big[e^{(I_N\otimes X)(t-\bar t_k)}\Big]^T(I_N\otimes HH^T)\Big[e^{(I_N\otimes X)(t-\bar t_k)}\Big]dt$, $\beta_2=\int_{\bar t_k}^{\bar t_k+\tau}\Big[\int_{\bar t_k}^te^{(I_N\otimes X)(t-r)}\theta(r)dr\Big]^T\Big[\int_{\bar t_k}^te^{(I_N\otimes X)(t-r)}\theta(r)dr\Big]dt$, and to get the first inequality we have used the Young's inequality.
On one hand, we have shown that $(X,H)$ is controllable.
In other words, $(H^T,X)$ is observable, which implies that $W$ is positive definite. Without loss of generality, assume that there is a positive constant $s_1$ such that $W\geq s_1I$.
On the other hand, using the well-known Cauchy-Schwartz inequality \cite{Bernau} gives
\begin{equation*}
\begin{aligned}
\beta_2&\leq \int_{\bar t_k}^{\bar t_k+\tau}(t-\bar t_k)\int_{\bar t_k}^t\Big[e^{(I_N\otimes X)(t-r)}\theta(r)\Big]^T\\
&\quad\cdot \Big[e^{(I_N\otimes X)(t-r)}\theta(r)\Big]drdt\\
&=\int_{\bar t_k}^{\bar t_k+\tau}(t-\bar t_k)\int_{\bar t_k}^t\|\theta(r)\|^2drdt,
\end{aligned}
\end{equation*}
where to get the last equality we have used the fact that $X$ is skew-symmetric.
Since $\lim_{t\rightarrow \infty}\theta(t)=0$, for $\forall \epsilon>0$, there exists $\bar t>0$ such that for $\forall t\geq \bar t$, $\|\theta\|\leq \epsilon$.
Then, we have
$\beta_2\leq \epsilon^2\int_{\bar t_k}^{\bar t_k+\tau}(t-\bar t_k)\int_{\bar t_k}^tdrdt=\frac{1}{3}\epsilon^2\tau^3$ for $\forall \bar t_k\geq \bar t$, which further implies that $\lim_{\bar t_k\rightarrow \infty}\beta_2=0$.
Thus, it holds that
\begin{equation}\label{v1}
V_1(\bar t_{k+1})-V_1(\bar t_k)\leq -\frac{c}{2}s_1\lambda_\Sigma^2V_1(\bar t_k)+\beta_1+\beta_3,
\end{equation}
where $\beta_3=\frac{c}{6}\lambda_\Sigma^2\|HH^T\|\epsilon^2\tau^3$, in which $\lim_{\bar t_k\rightarrow \infty} \beta_3=0$.
Without loss of generality, we can find a constant $s_2\in(0,1)$ such that $s_2\leq \frac{c}{2}s_1\lambda_\Sigma^2$ and rewrite \dref{v1} as
\begin{equation}\label{v1+}
V_1(\bar t_{k+1})-V_1(\bar t_k)\leq -s_2V_1(\bar t_k)+\beta_1+\beta_3.
\end{equation}
Then, we can rewrite \dref{v1+} as
\begin{equation}\label{v1++}
V_1(\bar t_{k+1})\leq sV_1(\bar t_k)+\beta(\bar t_k).
\end{equation}
where $s=1-s_2\in(0,1)$ and $\beta(\bar t_k)=\beta_1+\beta_3$, in which $\lim_{\bar t_k\rightarrow \infty}\beta(\bar t_k)=\lim_{k\rightarrow \infty}\beta(\bar t_k)=0$.
Therefore, we have
\begin{equation*}
\begin{aligned}
V_1(\bar t_{k})&\leq sV_1(\bar t_{k-1})+\beta(\bar t_{k-1})\\
&\leq s^2V_1(\bar t_{k-2})+s\beta(\bar t_{k-2})+\beta(\bar t_{k-1})\\
&\quad\vdots\\
&\leq s^kV_1(\bar t_0)+s^{k-1}\beta(\bar t_0)+s^{k-2}\beta(\bar t_1)+\cdots\\
&\quad+s\beta(\bar t_{k-2})+\beta(\bar t_{k-1}).
\end{aligned}
\end{equation*}
Because $\bar t_0=0$, we further get that
\begin{equation}\label{8}
\begin{aligned}
V_1(\bar t_{k})&\leq s^kV_1(0)+s^{k-1}\beta(\bar t_0)+s^{k-2}\beta(\bar t_1)+\cdots\\
&\quad+s\beta(\bar t_{k-2})+\beta(\bar t_{k-1}).
\end{aligned}
\end{equation}
Since $s\in(0,1)$ and $\lim_{k\rightarrow \infty}\beta(\bar t_k)=0$, we must have $\lim_{\bar t_k\rightarrow \infty} V_1(\bar t_k)=0$ according to \dref{8}.

According to \dref{5}, for $\forall t\in[\bar t_k,\bar t_{k+1}]$, there exists that $V_1(\bar t_{k+1})+\beta_1(t,\bar t_{k+1})\leq V_1(t)\leq V_1(\bar t_k)+\beta_1(\bar t_k,t)$.
Noting that $\lim_{t\rightarrow \infty}\beta_1(\bar t_k,t)=0$, we have $\lim_{t\rightarrow \infty}V_1(t)=\lim_{\bar t_k\rightarrow \infty}V_1(\bar t_k)=0$,  implying that $\lim_{t\rightarrow \infty}z_I=0$.

Until now, we have proved the convergence of $z_I$. Consequently, state consensus is achieved.
$\hfill $$\blacksquare$
\end{proof}

\begin{remark}
It should be mentioned that the above derivations are partly inspired by the proofs of Theorem 8.5 in \cite{Khalil2002nonliear} and Proposition 1 in \cite{Tuna}.
In light of Remark 1, the feedback matrix $G$ is easy to determine such that the event-based protocol \dref{pro1} and \dref{eve} satisfies Theorem 1.
Contrary to \cite{z-g-wu-event,wyxu2017event}, where the designs of the event-based protocols rely on a solution to two coupled matrix inequalities, the existence of which is unclear in general cases,
the protocol proposed in this paper can be explicitly constructed, without the need to solve any matrix equality or inequality.
Besides, our protocol, requiring neither the switching rule of topologies nor nonzero eigenvalues of the Laplacian matrix,
can be devised and utilized in a completely distributed manner.
\end{remark}

%
%


\begin{theorem}\label{zeno}
The closed-loop system
\dref{xid} exhibits no Zeno behaviors and the interval between two consecutive triggering instants for any agent is strictly positive in finite time.
\end{theorem}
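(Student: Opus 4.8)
The plan is to show that on any bounded time horizon $[0,T]$ there is a uniform, strictly positive lower bound on the spacing between two consecutive triggering instants of any agent; since a Zeno execution would require infinitely many events in a bounded interval, this establishes the theorem. The lower bound is allowed to shrink as $T\to\infty$ (because the threshold term $\mu e^{-\nu t}$ in \dref{eve} decays), which is precisely why the statement is phrased ``in finite time.''

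Fix an agent $i$ and two consecutive event times $t_k^i<t_{k+1}^i$. The measurement error $e_i$ is reset to zero whenever $i$ triggers, so $e_i(t_k^i)=0$; on $[t_k^i,t_{k+1}^i)$ one has $\dot{\tilde x}_i=A\tilde x_i$, hence $\dot e_i=Ae_i-Bu_i$ with $u_i$ piecewise constant. By variation of constants, $e_i(t)=-\int_{t_k^i}^t e^{A(t-s)}Bu_i(s)\,ds$, so
\[
\|e_i(t)\|\le \Big(\sup_{0\le r\le T}\|e^{Ar}\|\Big)\,\|B\|\,\Big(\sup_{0\le s\le T}\|u_i(s)\|\Big)\,(t-t_k^i).
\]
To make this useful I need $u_i$ bounded. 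From the proof of Theorem~\ref{theorem-1}, $\tilde V_3$ is bounded, hence $\xi$ (equivalently $z_I,z_{II}$) is bounded on $[0,\infty)$; moreover summing \dref{pro1} over $i$ and using $a_{ij}=a_{ji}$ gives $\sum_i u_i\equiv 0$, so the average $\bar x=\frac1N\sum_j x_j$ satisfies $\dot{\bar x}=A\bar x$ and is bounded because $A$ is neutrally stable. Therefore $x_i=\xi_i+\bar x$ is bounded, $\tilde x_i(t)=e^{A(t-t_k^i)}x_i(t_k^i)$ is bounded, and $\|u_i(t)\|\le U$ for a constant $U$. Thus $\|e_i(t)\|\le c_1(t-t_k^i)$ on $[0,T]$ with $c_1=c_1(T):=\big(\sup_{0\le r\le T}\|e^{Ar}\|\big)\|B\|\,U>0$.

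Next, examine what forces the event at $t_{k+1}^i$ via \dref{tk}. If it is caused by the switch of some local edge $a_{ij}$, then by the dwell-time property $t_{k+1}^i-t_k^i\ge\tau$. Otherwise $f_i(t_{k+1}^i)\ge 0$; but if $d_i(t)=0$ then $a_{ij}(t)=0$ for all $j$ and $f_i(t)=-\mu e^{-\nu t}<0$, so a threshold trigger forces $d_i\ge 1$ and, from \dref{eve},
\[
4d_i(t)\|G\|^2\|e_i(t)\|^2\ \ge\ \delta\sum_{j}a_{ij}(t)\|G(\tilde x_i-\tilde x_j)\|^2+\mu e^{-\nu t}\ \ge\ \mu e^{-\nu T}>0,
\]
hence $\|e_i(t_{k+1}^i)\|\ge \eta:=\big(\mu e^{-\nu T}/(4(N-1)\|G\|^2)\big)^{1/2}>0$. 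Combined with the linear growth estimate, $\eta\le\|e_i(t_{k+1}^i)\|\le c_1(t_{k+1}^i-t_k^i)$, so $t_{k+1}^i-t_k^i\ge \eta/c_1>0$. In all cases $t_{k+1}^i-t_k^i\ge\min\{\tau,\eta/c_1(T)\}>0$, so only finitely many events of agent $i$ occur in $[0,T]$; as $T$ is arbitrary, there is no Zeno behavior and the inter-event interval is strictly positive in finite time.

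\textbf{Main obstacle.} The delicate point is the uniform-in-time bound on $u_i$: a direct estimate is circular, since $u_i$ depends on the neighbors' sampled states and those in turn on the very closed loop whose non-Zeno property is in question. The resolution is to invoke results already in hand, namely the boundedness of $\xi$ from Theorem~\ref{theorem-1} and the invariance $\dot{\bar x}=A\bar x$ of the average dynamics (a consequence of $a_{ij}=a_{ji}$ together with neutral stability of $A$). The rest—the dichotomy between the two triggering causes and the degenerate case $d_i=0$—is routine bookkeeping.
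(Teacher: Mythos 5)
Your proposal follows essentially the same route as the paper: use the boundedness of the closed-loop trajectories (inherited from Theorem~\ref{theorem-1}) to show that after each reset the measurement error $e_i$ can only grow at a controlled rate, and observe that a threshold-induced trigger requires $\|e_i\|$ to reach the strictly positive floor coming from the term $\mu e^{-\nu t}$ in \dref{eve}, which takes a strictly positive amount of time; switch-induced events are handled separately. The only substantive differences are cosmetic and, if anything, improvements: you bound $\|e_i\|$ linearly via variation of constants rather than through the comparison ODE $\dot\psi=\|A\|\psi+c\sigma_i$ used in the paper, and you justify the boundedness of $x$ more explicitly through the invariance $\dot{\bar x}=A\bar x$ of the average (the paper simply asserts boundedness of $x$ from \dref{sys_homo} and \dref{pro1}).

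One step is stated incorrectly, though it does not sink the conclusion. You claim that if the event at $t_{k+1}^i$ is caused by an edge switch then $t_{k+1}^i-t_k^i\ge\tau$ ``by the dwell-time property.'' That is only true when $t_k^i$ was itself caused by a switch; if $t_k^i$ was a threshold trigger occurring just before a scheduled topology switch, the interval to the switch-induced event can be arbitrarily small, so your claimed uniform bound $\min\{\tau,\eta/c_1(T)\}$ over \emph{all} consecutive pairs is not established. The paper treats this as a separate case and resolves it by noting that only finitely many switches occur in finite time, so these intervals are finitely many and each strictly positive (with no explicit lower bound); your argument needs the same patch: on $[0,T]$ there are at most $T/\tau$ switch-induced events, and between them the threshold-induced events are separated by at least $\eta/c_1(T)$, which still yields finitely many events and a strictly positive minimum inter-event time on any finite horizon. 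Also, a minor point: $u_i$ is not piecewise constant on $[t_k^i,t_{k+1}^i)$ (the $\tilde x_j$ evolve as matrix exponentials of sampled states), but it is bounded, which is all your estimate actually uses.
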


\begin{proof}
To exclude Zeno behaviors, we consider the following four cases.

i) In the first case, both $t_k^i$ and $t_{k+1}^i$ are determined by the triggering function \dref{eve}.
Under Assumption \ref{assumption_gr}, we only need to exclude Zeno behaviors for the network \dref{xid} when $d_i(t)>0$.
Combining with \dref{sys_homo} and \dref{pro1} gives
\begin{equation*}\label{edot}
\begin {aligned}
\dot e_i=Ae_i-c\sum_{j=1}^Na_{ij}(t)BG(\tilde x_i-\tilde x_j),
\end{aligned}
\end{equation*}
which implies that
\begin{equation}\label{eidotnorm}
\begin {aligned}
\frac{d\|e_i\|}{dt}\leq \|A\|\|e_i\|+c\sum_{j=1}^Na_{ij}(t)\|BG\|\|\tilde x_i-\tilde x_j\|.
\end{aligned}
\end{equation}

Theorem \ref{theorem-1} shows that $\xi$ is bounded.
Since $A$ is neutrally stable (by Assumption \ref{assumption_A}), it is easy to see that $\tilde \xi$ is also bounded.
Combing \dref{sys_homo} and \dref{pro1} gives $\dot x=(I_N\otimes A)x+(c\mathcal L_{\theta}\otimes BG)\tilde \xi$.
Thus, $x$ is bounded, which further indicates the boundedness of $\tilde x$.
Then, it follows from \dref{eidotnorm} that
\begin{equation}\label{eidotnorm2}
\begin {aligned}
\frac{d\|e_i\|}{dt}\leq \|A\|\|e_i\|+c\sigma_i,
\end{aligned}
\end{equation}
where $\sigma_i$ denotes the upper bound of $\sum_{j=1}^Na_{ij}(t)\|BG\|\|\tilde x_i-\tilde x_j\|$ for $t$ from $t_k^i$ to $t_{k+1}^i$.

Define a function $\psi: [0,\infty)\rightarrow \mathbf{R}_+$, satisfying
\begin{equation}\label{diffequ}
\begin {aligned}
\dot \psi=\|A\|\psi+c\sigma_i,~\psi(0)=\|e_i(t_k^i)\|=0.
\end{aligned}
\end{equation}
Then, we obtain that $\|e_i(t)\|\leq \psi(t-t_k^i)$, where $\psi(t)$ is the analytical solution to \dref{diffequ}, given by $\psi(t)=\frac{c\sigma_i}{\|A\|}\left(e^{\|A\|t}-1\right)$.

On the other hand, the triggering function \dref{eve} satisfies $f_i(t)\leq 0$, if we have the following condition:
\begin{equation}\label{condition}
\begin {aligned}
\|e_i\|^2\leq \frac{\mu e^{-\nu t}}{d_i(t)\|G\|^2}.
\end{aligned}
\end{equation}
Then, the interval between two triggering instants $t_k^i$ and $t_{k+1}^i$ for agent $v_i$ can be lower bounded by the time for $\psi^2(t-t_k^i)$ evolving from 0 to the right hand of \dref{condition}.
Thus, a lower bound of $t_{k+1}^i-t_k^i$, denoted as $\tau_k^i$, can be obtained by solving the following inequality
\begin{equation}
\frac{c^2\sigma_i^2}{\|A\|^2}\left(e^{\|A\|t}-1\right)^2\geq \frac{\mu e^{-\nu t}}{d_i(t)\|G\|^2},
\end{equation}
from which, we have
\begin{equation}\label{lowerbound}
\begin {aligned}
\tau_k^i\geq \frac{1}{\|A\|}\mathrm{ln}\left(1+\frac{\|A\|}{c\sigma_i\|G\|}\sqrt{\frac{\mu e^{-\nu (t_k^i+\tau_k^i)}}{d_i(t)}}\right).
\end{aligned}
\end{equation}

ii) In the second case, $t_k^i$ is determined by the switch of the topology, while $t_{k+1}^i$ is determined by the triggering function \dref{eve}.
Since the measurement error $e_i$ is reset to zero at $t_k^i$, this case is similar to the first case and the details are omitted here for brevity.

iii) In the third case, both $t_k^i$ and $t_{k+1}^i$ are determined by the switches of the topology.
It is obvious that the interval is not less than the dwelling time $\tau$.

iv) In the last case, $t_k^i$ is determined by the triggering function
\dref{eve}, while $t_{k+1}^i$ is determined by the switch of the topology.
Note that in finite time, there is only a finite number of switches.
Therefore, the minimum of the finite interval $\tau_k^i=t_{k+1}^i-t_k^i$ is nonzero, and there exists a minimum inter-event time, while its value is not available in this case.

In conclusion, Zeno behaviors are excluded and the interval between two consecutive triggering instants is strictly positive in finite time.
$\hfill $$\blacksquare$
\end{proof}

\begin{remark}
Generally speaking, the Zeno behavior is excluded if there does not exist infinite triggers within a finite period of time.
However, as pointed out in \cite{Nowzari17Event}, even though the Zeno behavior is ruled out theoretically,
it is still troublesome from an implementation viewpoint, if the physical hardware cannot match the speed of actions required by the protocol.
In other words, ensuring a system does not exist the Zeno behavior may not be enough to guarantee the protocol can be implemented on a physical system.
As an expected feature, the triggering rule \dref{eve} designed in this paper guarantees that the interval between different triggering instants in finite time is not less than a strictly positive constant.
Besides, the hybrid triggering functions \dref{eve} including the state term $-\delta \sum_{j=1}^Na_{ij}(t)\|G(\tilde x_i-\tilde x_j)\|^2$ and the time term $-\mu e^{-\nu t}$ are more propitious to reduce communication frequency compared to the ones in \cite{DYang2016Decentralized} when the time $t$ becomes very long or even as $t\rightarrow \infty$.
\end{remark}

\begin{remark}
Theorems 1 and 2 show that the presented event-triggered algorithm is applicable to switching networks satisfying the jointly connected condition.
According to the triggering rule \dref{tk}, communications only take place when the triggering function \dref{eve} is violated or the topology switches.
It should be noted when $\tau\rightarrow +\infty$, the event-based protocol here is reduced to the one for fixed graphs as a special case.
If $\tau$ is too small, there is no need to check whether the triggering function \dref{eve} is violated or not and  communications is not required until the next switch of the topologies takes place.
\end{remark}

\section{event-based output consensus of heterogeneous multi-agent systems}\label{section_output}

\subsection{Problem Formulation}
In this section, we consider $N$ heterogeneous linear agents, whose dynamics can be described by
\begin{equation}\label{sys}
\begin{aligned}
\dot x_i&=A_ix_i+B_iu_i+E_iw_0,\\
y_i&=C_ix_i+F_iw_0,~i=1,\cdots,N,
\end{aligned}
\end{equation}
where $x_i\in \mathbf{R}^{n_i}$ denotes the state, $u_i\in \mathbf{R}^{m_i}$ represents the control input, $y_i\in\mathbf{R}^{p_i}$ is the output, and $A_i\in \mathbf{R}^{n_i\times n_i}$, $B_i\in \mathbf{R}^{n_i\times m_i}$, $C_i\in \mathbf{R}^{p_i\times n_i}$, $E_i\in \mathbf{R}^{n_i\times q}$, and $F_i\in \mathbf{R}^{p_i\times q}$ are constant matrices.
The exogenous signal $w_0\in \mathbf{R}^q$, which can be treated as a reference input or an external disturbance, satisfies the following dynamics:
\begin{equation}\label{exo}
\begin{aligned}
\dot w_0&=Sw_0,
\end{aligned}
\end{equation}
where
$S\in\mathbf{R}^{q\times q}$.

The objective here is to design distributed event-based algorithms under which all subsystems described by \dref{sys} converge to a common output
and Zeno behaviors can be eliminated.

Similarly as in \cite{Su2012output}, we can view the exosystem \dref{exo} as a leader, indexed by 0, and the $N$ subsystems \dref{sys} as followers, indexed by $1, \cdots,N$.
Denote $\Delta_{\theta}\triangleq \mathrm{diag}\{a_{10}(t),\cdots,a_{N0}(t)\}$, where $a_{i0}(t)=1$ if the leader is a neighbor of $i$ currently and $a_{i0}(t)=0$ otherwise.
Use $\mathcal{\bar G}_{\theta}$ to denote the leader-follower graph and
let $\mathcal H_{\theta}=\mathcal L_{\theta}+\Delta_{\theta}$.
The leader has directed pathes to all followers during $[\bar t_k,\bar t_{k+1})$, if the union graph $\bigcup_{t\in[\bar t_k,\bar t_{k+1})}\bar{\mathcal G}_{\theta(t)}$ contains a directed spanning tree with the leader as the root node.

\begin{assumption}\label{assumption_AB}
The pairs $(A_i,B_i)$, $\forall i\in\mathcal V$, are stabilizable.
\end{assumption}


\begin{assumption}\label{assumption_S}
$S$ has no eigenvalues with positive real parts.
\end{assumption}

\begin{assumption}\label{assumption_spec}
For all $\lambda\in \sigma(S)$, where $\sigma(S)$ represents the spectrum of $S$,
$\text{rank}\left(\begin{bmatrix}A_i-\lambda I&B_i\\C_i&0\end{bmatrix}\right)=n_i+p_i$.
\end{assumption}

\begin{assumption}\label{assumption_SR}
There exist solutions $R\in \mathbf{R}^{p_i\times q}$ such that the following regulator equations have solutions $\Pi_i\in \mathbf R^{n_i\times q}$ and $U_i\in \mathbf R^{m_i\times q}$:
\begin{equation}\label{regulator equa}
\begin{aligned}
\Pi_iS&=A_i\Pi_i+B_iU_i+E_i,\\
R&=C_i\Pi_i+F_i,~i=1,\cdots,N.
\end{aligned}
\end{equation}
\end{assumption}

\begin{assumption}\label{assumption_graph}
The leader has directed pathes to all followers in the union graph $\bigcup_{t\in[\bar t_k,\bar t_{k+1})}\bar{\mathcal G}_{\theta(t)}$.
\end{assumption}

\begin{remark}
Assumptions 3-6 are often used in the output consensus or regulation control of heterogeneous networks \cite{WHu2017output,WHu2017cooperative,YSu2012cooperative,JHuang2004nonliear}.
According to Assumption \ref{assumption_spec}, the transmission zeros of the system \dref{sys} do not coincide with the eigenvalues of the matrix $S$, which is often called the transmission zeros condition \cite{JHuang2004nonliear}.
Assumption \ref{assumption_SR} gives a characterization of the control objective in terms of the solvability of a set of linear matrix equations. This characterization allows the linear output consensus problem to be studied using the familiar mathematic tool of linear algebra.
\end{remark}

\subsection{Event-Based Estimates of the Exogenous Signal}

Since the exogenous signal \dref{exo} is available to only a subset of followers, we first design a distributed event-based observer for each follower as
\begin{equation}\label{obser}
\begin {aligned}
\dot w_i=Sw_i+c\sum_{j=0}^N{a_{ij}(t)(\tilde w_i-\tilde w_j)},~\forall i\in\mathcal V,
\end{aligned}
\end{equation}
where $c>0$, $w_i(t)$ represents the estimate of the exogenous signal $w_0(t)$, and $\tilde w_i(t)=e^{S(t-t_k^i)}w_i(t_k^i)$.
Denote $z_i=w_i-w_0$ and $\tilde z_i=\tilde w_i-\tilde w_0$, $i=1,\cdots,N$.
Let $z=[z_1^T,\cdots,z_N^T]^T$ and $\tilde z=[\tilde z_1^T,\cdots,\tilde z_N^T]^T$.
Let $z_0=0$ and $\tilde z_0=0$.
Then, it follows that $z=0$ if and only if $w_0=w_1=\cdots=w_N$.
Thus, $z_i$ satisfies the following dynamics:
\begin{equation}\label{zd}
\begin {aligned}
\dot z_i&=Sz_i-c\sum_{j=0}^N{a_{ij}(t)(\tilde z_i-\tilde z_j)},~\forall i\in\mathcal V.
\end{aligned}
\end{equation}
Rewrite \dref{zd} as
\begin{equation}\label{zd2}
\begin {aligned}
\dot z(t)&=(I_N\otimes S)z-(c\mathcal H_{\theta}\otimes I_q)\tilde z.
\end{aligned}
\end{equation}

Let $\varphi=[\varphi_1^T,\cdots,\varphi_N^T]^T=(I_N\otimes e^{-St})z$ and $\tilde \varphi=[\tilde\varphi_1^T,\cdots,\tilde\varphi_N^T]^T=(I_N\otimes e^{-St})\tilde z$ with $\varphi(0)=z(0)$ and $\tilde \varphi(0)=\tilde z(0)$.
It then follows from \dref{zd2} that
\begin{equation}\label{varphid}
\begin {aligned}
\dot \varphi&=-(I_N\otimes Se^{-St})z+(I_N\otimes e^{-St})\dot z\\
&=-(c\mathcal H_{\theta}\otimes I_q)\tilde \varphi.
\end{aligned}
\end{equation}

\begin{lemma}\label{lemma_var}
If $\varphi(t)$ converges to 0 exponentially, so does $z(t)$.
\end{lemma}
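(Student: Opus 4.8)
The plan is to exploit the explicit algebraic relation between $\varphi$ and $z$. From the definition $\varphi=(I_N\otimes e^{-St})z$ we obtain the inverse relation $z(t)=(I_N\otimes e^{St})\varphi(t)$ for all $t\ge 0$, hence $\|z(t)\|\le\|e^{St}\|\,\|\varphi(t)\|$. So the whole lemma reduces to controlling the growth of $\|e^{St}\|$ and combining it with the assumed exponential decay of $\varphi$.

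First I would invoke Assumption \ref{assumption_S}, that $S$ has no eigenvalue with positive real part. Bringing $S$ into its real Jordan canonical form shows that every entry of $e^{St}$ is a finite sum of terms of the form $t^{\ell}e^{\operatorname{Re}(\lambda)t}$ times a sine or cosine, with $\operatorname{Re}(\lambda)\le 0$ and $\ell\le q-1$. Consequently there is a constant $\kappa>0$ such that $\|e^{St}\|\le\kappa(1+t)^{q-1}$ for all $t\ge 0$; i.e. $\|e^{St}\|$ grows at most polynomially in $t$.

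Next, assume $\varphi$ converges to $0$ exponentially, say $\|\varphi(t)\|\le c_1 e^{-\alpha t}$ for some $c_1>0$ and $\alpha>0$. Then $\|z(t)\|\le\kappa c_1(1+t)^{q-1}e^{-\alpha t}$. Choosing any $\beta\in(0,\alpha)$, the function $(1+t)^{q-1}e^{-(\alpha-\beta)t}$ is continuous on $[0,\infty)$ and tends to $0$ as $t\to\infty$, hence is bounded by some $c_2>0$; therefore $\|z(t)\|\le\kappa c_1 c_2\, e^{-\beta t}$ for all $t\ge 0$, which establishes the exponential convergence of $z$.

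The only delicate point is that $S$ may have nontrivial Jordan blocks on the imaginary axis, so $\|e^{St}\|$ need not be bounded — only polynomially bounded. The step that absorbs this is the elementary fact that a polynomial times a decaying exponential is dominated by a slightly slower decaying exponential, which is precisely why the argument passes from the rate $\alpha$ to a strictly smaller rate $\beta$; I do not expect any other obstacle.
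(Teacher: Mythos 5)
Your proof is correct and follows essentially the same route as the paper's: invert the relation to get $z=(I_N\otimes e^{St})\varphi$, use Assumption~\ref{assumption_S} to bound $\|e^{St}\|$ by a polynomial in $t$, and multiply by the exponentially decaying $\|\varphi(t)\|$. You are in fact slightly more explicit than the paper, which states the polynomial bound $\Omega(t)$ but leaves implicit the final step of absorbing $\Omega(t)e^{-\mu_2 t}$ into a strictly slower exponential rate.
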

\begin{proof}
Based on the convergency of $\varphi$, we can choose constants $\mu_1$ and $\mu_2$ such that
\begin{equation*}
\begin{aligned}
\|\varphi(t)\|\leq \mu_1\|\varphi(0)\|e^{-\mu_2t}.
\end{aligned}
\end{equation*}
According to Assumption \ref{assumption_S}, there exists a polynomial $\Omega(t)$ satisfying
\begin{equation*}
\begin{aligned}
\|(I_N\otimes e^{St})\|\leq \Omega(t).
\end{aligned}
\end{equation*}
Since $\varphi=(I_N\otimes e^{-St})z$, we get
\begin{equation*}
\begin{aligned}
\|z(t)\|&\leq \|(I_N\otimes e^{St})\|\cdot \|\varphi(t)\|
\leq \mu_1\|z(0)\|\Omega(t)e^{-\mu_2t}.
\end{aligned}
\end{equation*}
This means if $\varphi(t)$ converges to 0 exponentially, so does $z(t)$.
$\hfill $$\blacksquare$
\end{proof}


Define the measurement error as
\begin{equation}\label{ei}
\begin {aligned}
e_i\triangleq \tilde w_i-w_i,~i=1,\cdots,N.
\end{aligned}
\end{equation}
Let $\epsilon=[\epsilon_1^T,\cdots,\epsilon_N^T]^T$ with $\epsilon_i\triangleq e^{-St}e_i(t)$, $i=1,\cdots,N$.
Event triggering instants are determined by \dref{tk} where
\begin{equation}\label{eve1}
\begin {aligned}
f_i(t)&=d_i(t)\|\epsilon_i\|^2
-\frac{1}{4}\sum_{j=0}^N{a_{ij}(t)\|\tilde w_i-\tilde w_j\|^2}-\mu e^{-\nu t},
\end{aligned}
\end{equation}
with $\tilde w_0\triangleq w_0$ and $d_i(t)$ being the degree of agent $i$ associated with the subgraph $\mathcal G_{\theta(t)}$.


\begin{theorem}\label{theorem_w}
The observers \dref{obser} with $c>0$ can track the exogenous signal $w_0(t)$ under the triggering function \dref{eve1}.
Moreover, there does not exist the Zeno behavior.
\end{theorem}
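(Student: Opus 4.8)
The statement has the same two-part shape as Theorems~\ref{theorem-1} and \ref{zeno} (first the observer errors vanish, then Zeno behavior is excluded), and I would prove it by transporting those arguments to $z$ via the ``detrended'' error $\varphi$ of \dref{varphid}, whose dynamics $\dot\varphi=-(c\mathcal H_{\theta}\otimes I_q)\tilde\varphi$ carries no drift. Two elementary identities do the bookkeeping: componentwise $\dot\varphi_i=-c\sum_{j=0}^N a_{ij}(t)(\tilde\varphi_i-\tilde\varphi_j)$ with $\tilde\varphi_0\equiv 0$ (since $\mathcal H_\theta=\mathcal L_\theta+\Delta_\theta$ and $\tilde z_0=0$); and $\tilde\varphi_i(t)=\varphi_i(t_k^i)$ on $[t_k^i,t_{k+1}^i)$, i.e.\ $\tilde\varphi_i$ is a zero-order hold of $\varphi_i$, so that $\epsilon_i=\tilde\varphi_i-\varphi_i$ and hence $\tilde\varphi=\varphi+\epsilon$, and $\|\tilde w_i-\tilde w_j\|^2=\|e^{St}(\tilde\varphi_i-\tilde\varphi_j)\|^2$ with $\tilde\varphi_0\equiv 0$, where $\|e^{\pm St}\|$ is polynomially bounded by Assumption~\ref{assumption_S}.

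\emph{Tracking.} Take $V=\tfrac12\varphi^T\varphi$. Since $\mathcal H_{\theta}$ is symmetric positive semidefinite, $\dot V=-c\varphi^T(\mathcal H_{\theta}\otimes I_q)\varphi-c\varphi^T(\mathcal H_{\theta}\otimes I_q)\epsilon$; I would bound the cross term by Young's inequality and insert the triggering condition $f_i\le 0$ from \dref{eve1} (valid for all $t$ by the event rule \dref{tk}) to obtain $\dot{\tilde V}\le-\tfrac c2\varphi^T(\mathcal H_{\theta}\otimes I_q)\varphi$ for $\tilde V=V+\kappa e^{-\nu t}$ with $\kappa>0$ a constant, so $\tilde V$ is bounded and convergent and $\varphi,\tilde\varphi$ are bounded. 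From here the proof of Theorem~\ref{theorem-1} applies verbatim: Cauchy's criterion (Lemma~\ref{lemma-cauchys}) on the windows $[\bar t_k,\bar t_{k+1})$, split into dwell-time subintervals on which $\mathcal G_{\theta}$ is fixed, gives $\int_t^{t+\tau}\varphi^T(\mathcal H_\Sigma\otimes I_q)\varphi\,ds\to 0$, where $\mathcal H_\Sigma$ is the sum of the $\mathcal H_{\theta}$ over one window's subintervals; by Assumption~\ref{assumption_graph} the union graph has a spanning tree rooted at the leader, so $\mathcal H_\Sigma$ is positive definite, which forces $\int_t^{t+\tau}\|\varphi\|^2\,ds\to 0$, and then boundedness of $\varphi,\dot\varphi$ together with Barbalat's Lemma (Lemma~\ref{lemma_bar}) yields $\varphi(t)\to 0$. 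To reach the exponential rate required by Lemma~\ref{lemma_var}, I would reproduce the estimates \dref{v1}--\dref{8}: bound $V(\bar t_{k+1})-V(\bar t_k)\le-\tfrac c2\lambda_{\min}(\mathcal H_\Sigma)\int_{\bar t_k}^{\bar t_k+\tau}\|\varphi\|^2dt+\beta_1$ with $\beta_1\to 0$, use the variation-of-constants form of $\varphi$ over the window with the now-small perturbation $\epsilon$ to get a recursion $V(\bar t_{k+1})\le sV(\bar t_k)+\beta(\bar t_k)$ with $s\in(0,1)$ and $\beta(\bar t_k)\to 0$, iterate, and fill in between samples via $\dot{\tilde V}\le 0$; this gives exponential decay of $\varphi$, hence of $z$ by Lemma~\ref{lemma_var}, i.e.\ the observers track $w_0$.

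\emph{No Zeno.} This is lighter than Theorem~\ref{zeno} because the $\varphi$-dynamics is drift-free. From $\dot\varphi_i=-c\sum_{j=0}^N a_{ij}(t)(\tilde\varphi_i-\tilde\varphi_j)$ and $\epsilon_i=\tilde\varphi_i-\varphi_i$ with $\tilde\varphi_i$ piecewise constant one gets $\tfrac{d\|\epsilon_i\|}{dt}\le c\sum_{j=0}^N a_{ij}(t)\|\tilde\varphi_i-\tilde\varphi_j\|\le c\sigma_i$ for a finite $\sigma_i$ on any interval between consecutive events of agent $i$ (using that $\tilde\varphi$ is bounded by the tracking part). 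Since $\epsilon_i$ resets to zero at $t_k^i$, this gives the linear bound $\|\epsilon_i(t)\|\le c\sigma_i(t-t_k^i)$, whereas an event requires $d_i(t)\|\epsilon_i\|^2\ge\tfrac14\sum_{j=0}^N a_{ij}(t)\|\tilde w_i-\tilde w_j\|^2+\mu e^{-\nu t}\ge\mu e^{-\nu t}$; comparing the two produces a strictly positive lower bound on $t_{k+1}^i-t_k^i$ when both instants come from \dref{eve1}, and the same bound covers the case where $t_k^i$ is a switching instant since $\epsilon_i$ is reset there too. When $t_{k+1}^i$ is a switching instant the gap is at least the dwell time $\tau$, or, over finite time, the minimum of finitely many positive numbers. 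Hence Zeno behavior is excluded, and as in Theorem~\ref{zeno} the inter-event times admit a positive lower bound on any finite horizon.

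\emph{Main obstacle.} The crux is the same feature that makes jointly connected topologies hard: $\mathcal H_{\theta(t)}$ is in general only positive semidefinite at any fixed instant (the leader need not reach every follower then), so no instantaneous contraction is available, and everything has to be routed through the window-wise Cauchy/Barbalat argument and then, for the rate needed by Lemma~\ref{lemma_var}, through the discrete recursion $V(\bar t_{k+1})\le sV(\bar t_k)+\beta(\bar t_k)$ — all while keeping the $e^{\pm St}$ factors hidden in \dref{eve1} and in $\epsilon_i$ under control through the polynomial bound furnished by Assumption~\ref{assumption_S}.
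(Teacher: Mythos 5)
Your proposal follows essentially the same route as the paper: the same Lyapunov function $V=\tfrac12\varphi^T\varphi$, the same Young's-inequality-plus-triggering-function bound yielding $\dot V\le-\tfrac c2\varphi^T(\mathcal H_{\theta}\otimes I_q)\varphi+c\mu Ne^{-\nu t}$, the same appeal to the windowed Cauchy/Barbalat machinery of Theorem~\ref{theorem-1} (with $\mathcal H_\Sigma$ positive definite under Assumption~\ref{assumption_graph}), Lemma~\ref{lemma_var} to pass from $\varphi$ to $z$, and the Theorem~\ref{zeno}-style case analysis for Zeno exclusion adapted to the drift-free $\varphi$-dynamics. If anything you are more explicit than the paper, which merely says ``similarly as in the proof of Theorem 1'' and ``similarly eliminated as in Theorem 2'': in particular you correctly flag that Lemma~\ref{lemma_var} demands an exponential rate and route this through the discrete recursion $V(\bar t_{k+1})\le sV(\bar t_k)+\beta(\bar t_k)$, a point the paper glosses over.
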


\begin{proof}
Construct the Lyapunov function candidate as
\begin{equation}\label{lya1}
\begin {aligned}
V_4&=\frac{1}{2}\varphi^T\varphi.
\end{aligned}
\end{equation}
Evidently, $V_4$ is positive definite, whose derivative is given by
\begin{equation}\label{lya1d1}
\begin {aligned}
\dot V_4
&=-\varphi^T(c\mathcal H_{\theta}\otimes I_q)\tilde \varphi\\
&=-c\sum_{i=1}^N{a_{i0}(t)\varphi_i^T\tilde \varphi_i}
-c\sum_{i=1}^N\sum_{j=1}^N{a_{ij}(t)\varphi_i^T(\tilde \varphi_i-\tilde \varphi_j)}.
\end{aligned}
\end{equation}
It is easy to verify that
\begin{equation}\label{lya3d2}
\begin {aligned}
-\sum_{i=1}^N&{a_{i0}(t)\varphi_i^T\tilde \varphi_i}
=-\frac{1}{2}\sum_{i=1}^N{a_{i0}(t)\varphi_i^T\varphi_i}\\
&\quad-\frac{1}{2}\sum_{i=1}^N{a_{i0}(t)\tilde \varphi_i^T\tilde \varphi_i}
+\frac{1}{2}\sum_{i=1}^N{a_{i0}(t)\epsilon_i^T\epsilon_i},
\end{aligned}
\end{equation}
and
\begin{equation}\label{lya3d4}
\begin {aligned}
&\quad-\sum_{i=1}^N{\sum_{j=1}^N{a_{ij}(t)\varphi_i^T(\tilde \varphi_i-\tilde \varphi_j)}}\\
&=-\frac{1}{4}\sum_{i=1}^N{\sum_{j=1}^N{a_{ij}(t)(\varphi_i-\varphi_j)^T(\varphi_i-\varphi_j)}}\\
&\quad-\frac{1}{4}\sum_{i=1}^N{\sum_{j=1}^N{a_{ij}(t)(\tilde \varphi_i-\tilde \varphi_j)^T(\tilde \varphi_i-\tilde \varphi_j)}}\\
&\quad+\frac{1}{4}\sum_{i=1}^N{\sum_{j=1}^N{a_{ij}(t)(\epsilon_i-\epsilon_j)^T (\epsilon_i-\epsilon_j)}}.
\end{aligned}
\end{equation}
Using the Young's Inequality gives
\begin{equation}\label{young1'}
\begin {aligned}
&\quad\sum_{i=1}^N\sum_{j=1}^Na_{ij}(t){(\epsilon_i-\epsilon_j)^T(\epsilon_i-\epsilon_j)}\\
&\leq 2\sum_{i=1}^N\sum_{j=1}^Na_{ij}(t){\epsilon_i^T\epsilon_i}
+2\sum_{i=1}^N\sum_{j=1}^Na_{ij}(t){\epsilon_j^T\epsilon_j}\\
&=4\sum_{i=1}^N\sum_{j=1}^Na_{ij}(t){\epsilon_i^T\epsilon_i}.
\end{aligned}
\end{equation}

Denote $\tilde\varphi_0=0$. Substituting \dref{zd}, \dref{lya3d2}, \dref{lya3d4}, and \dref{young1'} into \dref{lya1d1} yields
\begin{equation}\label{lya3d6}
\begin {aligned}
\dot V_4
&\leq-\frac{c}{2}\varphi^T(\mathcal H_{\theta}\otimes I_q)\varphi-\frac{c}{2}\sum_{i=1}^N{a_{i0}(t)\|\tilde \varphi_i\|^2}\\
&+\frac{c}{2}\sum_{i=1}^N{a_{i0}(t)\|\epsilon_i\|^2}-\frac{c}{4}\sum_{i=1}^N{\sum_{j=1}^N{a_{ij}(t)\|\tilde \varphi_i-\tilde \varphi_j\|^2}}\\
&\quad+c\sum_{i=1}^N{\sum_{j=1}^N{a_{ij}(t)\|\epsilon_i\|^2}}\\
&\leq -\frac{c}{2}\varphi^T(\mathcal H_{\theta}\otimes I_q)\varphi\\
&\quad+c\sum_{i=1}^N\left\{{d_i(t)\|\epsilon_i\|^2-\frac{1}{4}{\sum_{j=0}^N{a_{ij}(t)\|\tilde \varphi_i-\tilde \varphi_j\|^2}}}\right\}\\
&\leq -\frac{c}{2}\varphi^T(\mathcal H_{\theta}\otimes I_q)\varphi+c\mu Ne^{-\nu t},
\end{aligned}
\end{equation}
where we have used the triggering function \dref{eve1} to get the last inequality.


Similarly as in the proof of Theorem \ref{theorem-1}, we can prove that $\lim_{t\rightarrow \infty}\varphi(t)=0$.
According to Lemma \ref{lemma_var}, the observers \dref{obser} can track the exogenous signal $w_0(t)$.

Zeno behaviors can be similarly eliminated as in proof of Theorem \ref{zeno}.
$\hfill $$\blacksquare$
\end{proof}

\subsection{Distributed Control Inputs}
Upon the basis of the designed observer \dref{obser}, we present the following controller
\begin{equation}\label{controller}
\begin {aligned}
u_i&=K_{1i}x_i+K_{2i}w_i,~i=1,\cdots,N,
\end{aligned}
\end{equation}
where $w_i$ is defined in \dref{obser}, and $K_{1i}$ and $K_{2i}$ are feedback matrices to be designed.
Substituting \dref{controller} into \dref{sys} gives the following closed-loop dynamics:
\begin{equation}\label{closed}
\begin{aligned}
\dot x_i&=(A_i+B_iK_{1i})x_i+B_iK_{2i}w_i+E_iw_0,\\
y_i&=C_ix_i+F_iw_0,~i=1,\cdots,N.
\end{aligned}
\end{equation}


\begin{theorem}\label{theorem_out}
Select $K_{1i}$ such that $A_i+B_iK_{1i}$ are Hurwitz and $K_{2i}=U_i-K_{1i}\Pi_i$, $i=1,\cdots,N$, where $(X_i,\Pi_i,R_i)$ are unique solutions to regulator equations \dref{regulator equa}.
Output consensus is achieved under the event-based observer \dref{obser}, the triggering function \dref{eve1}, and the local controller \dref{controller}.
\end{theorem}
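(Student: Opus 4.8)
The plan is to reduce this to a standard output-regulation argument, exploiting the fact, already established in Theorem \ref{theorem_w}, that each observer state $w_i$ asymptotically reconstructs the exosystem state $w_0$. First I would introduce the regulated-state error $\bar x_i\triangleq x_i-\Pi_i w_0$ and the observer error $z_i\triangleq w_i-w_0$, so that $z_i(t)\to 0$ by Theorem \ref{theorem_w} together with Lemma \ref{lemma_var}. Differentiating $\bar x_i$ along the closed loop \dref{closed}, substituting the controller \dref{controller} with $K_{2i}=U_i-K_{1i}\Pi_i$, writing $w_i=w_0+z_i$, and using the first regulator equation in \dref{regulator equa}, namely $\Pi_i S=A_i\Pi_i+B_iU_i+E_i$, to cancel all the terms proportional to $w_0$, the closed loop should collapse to
\begin{equation*}
\dot{\bar x}_i=(A_i+B_iK_{1i})\bar x_i+B_iK_{2i}z_i,\qquad i=1,\cdots,N .
\end{equation*}

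Next I would observe that, by the choice of $K_{1i}$, the matrix $A_i+B_iK_{1i}$ is Hurwitz, so the displayed dynamics describe an exponentially stable linear system driven by the vanishing input $B_iK_{2i}z_i$. By the variation-of-constants formula,
\begin{equation*}
\bar x_i(t)=e^{(A_i+B_iK_{1i})t}\bar x_i(0)+\int_0^t e^{(A_i+B_iK_{1i})(t-s)}B_iK_{2i}z_i(s)\,ds ,
\end{equation*}
where the first term decays exponentially and the second is the convolution of an exponentially decaying kernel with a signal converging to zero; a routine estimate (splitting the integral at a large time) then yields $\lim_{t\to\infty}\bar x_i(t)=0$ for every $i\in\mathcal V$.

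Finally, from the second regulator equation $R=C_i\Pi_i+F_i$ one has $y_i=C_ix_i+F_iw_0=C_i\bar x_i+(C_i\Pi_i+F_i)w_0=C_i\bar x_i+Rw_0$, hence $y_i(t)-Rw_0(t)=C_i\bar x_i(t)\to 0$. Consequently $y_i(t)-y_j(t)=C_i\bar x_i(t)-C_j\bar x_j(t)\to 0$ for all $i,j\in\mathcal V$, i.e. output consensus is achieved, with the common trajectory $Rw_0(t)$.

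I expect the only mildly delicate point to be the convergence $\bar x_i\to 0$: since Assumption \ref{assumption_S} only forbids eigenvalues of $S$ with positive real part, $z_i$ need not decay purely exponentially ($e^{St}$ may carry a polynomial factor), so in the convolution bound this polynomial growth has to be dominated by the strictly exponential kernel $e^{(A_i+B_iK_{1i})(t-s)}$ --- this is exactly the classical ``vanishing input into an asymptotically stable system'' estimate. Everything else (the error transformation, the regulator-equation cancellations, and the final output computation) is routine linear algebra; and since the controller \dref{controller} uses only the locally and continuously available signals $x_i$ and $w_i$, the event-triggered sampling enters solely through the observers \dref{obser}, where it has already been handled in Theorem \ref{theorem_w}, so no additional Zeno analysis is required here.
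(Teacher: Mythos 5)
Your proposal is correct and follows essentially the same route as the paper: the same regulated-state error $x_i-\Pi_i w_0$, the same cancellation via the regulator equations, convergence of the observer error from Theorem \ref{theorem_w}, and the final output computation via $R=C_i\Pi_i+F_i$. (Your forcing term $B_iK_{2i}z_i$ in the error dynamics is in fact the correct coefficient, and your explicit convolution estimate makes rigorous the step the paper dismisses as ``not difficult to obtain.'')
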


\begin{proof}
Let $\phi_i=x_i-\Pi_iw_0$. Noting that \dref{regulator equa} and $K_{2i}=U_i-K_{1i}\Pi_i$, we can rewrite \dref{closed} as
\begin{equation}\label{phid}
\begin{aligned}
\dot \phi_i&=(A_i+B_iK_{1i})\phi_i-(E_i-\Pi_iS)z_i,\\
y_i&=C_ix_i+F_iw_0,~i=1,\cdots,N.
\end{aligned}
\end{equation}
According to Theorem \ref{theorem_w}, we have $\lim_{t\rightarrow \infty}{z_i(t)}=0$.
Thus, if we choose $K_{1i}$, $i=1,\cdots,N$, such that $A_i+B_iK_{1i}$ are Hurwitz, it is not difficult to obtain the result that
$\lim_{t\rightarrow \infty}{\phi_i(t)}=0$, which further leads to
\begin{equation*}
\begin{aligned}
&\quad\lim_{t\rightarrow \infty}{(y_i(t)-y_j(t))}\\
&=\lim_{t\rightarrow \infty}[(C_ix_i+F_iw_0)-(C_jx_j+F_jw_0)]\\
&=\lim_{t\rightarrow \infty}{[(C_i\Pi_i+F_i)w_0-(C_j\Pi_j+F_j)w_0]}\\
&\quad+\lim_{t\rightarrow \infty}C_i\phi_i-\lim_{t\rightarrow \infty}C_j\phi_j\\
&=\lim_{t\rightarrow \infty}(R-R)w_0\\
&=0.
\end{aligned}
\end{equation*}
In conclusion, output consensus of heterogeneous systems \dref{sys} is achieved.
$\hfill $$\blacksquare$
\end{proof}

\begin{remark}
Theorems 3 and 4 show that the proposed protocol \dref{obser}, \dref{eve1}, and \dref{controller} is able to solve the event-driven output consensus control problem of heterogeneous networks.
In particular, the state consensus of homogeneous agents considered in Section III can be treated as a special case here, if we let $A_i = A$, $B_i = B$, $C_i = I$, $E_i=0$, and $F_i=0$, $\forall i\in\mathcal V$.
\end{remark}

\begin{remark}
Compared to \cite{Su2012output}, where output consensus of heterogeneous networks with continuous communications is considered,
the event-based protocol given in this paper does not require continuous communications either between sensors and controllers or among neighboring agents.
For each agent, both the control input and the triggering function are only based on state estimates of neighboring agents $\tilde w_j$ (or $\tilde x_j$) but not their real state $w_j$ (or $x_j$).
As for $e_i(t)=\tilde w_i(t)-w_i(t)$ (or $e_i(t)=\tilde x_i(t)-x_i(t)$), it can be computed according to its own information rather than neighbors' one.
In other words, discrete information of neighbors at event instants rather than continuous one is required for control laws' updating and triggering functions' monitoring.
Thus, the event-based protocols proposed in this paper are able to reduce communication frequency when implemented on practical systems.
\end{remark}

\section{Simulation Examples}\label{s_sim}
In this section, numerical simulations are introduced to demonstrate the effectiveness of the presented algorithms.

{\bf Example 1:}
The dynamics are described by \dref{sys_homo} with
$A=\begin{bmatrix}0&1\\-1&0\end{bmatrix}$ and $B=\begin{bmatrix}0&1\end{bmatrix}^T$.
All initial values of the agents are randomly chosen.
Denote the network graph as $\mathcal {G}_{\theta}$ with possible interaction graphs $\{\mathcal {G}_1,\mathcal {G}_2,\mathcal {G}_3,\mathcal {G}_4\}$ shown in Fig. \ref{fig:subfig}.
Note that there exist no any connections among these nodes in $\mathcal G_2$.
The interaction graphs are switched as $\mathcal {G}_1\rightarrow \mathcal {G}_2 \rightarrow \mathcal {G}_3 \rightarrow \mathcal {G}_4 \rightarrow \mathcal {G}_1\rightarrow \cdots$, and each graph is active for $0.5$ s.
The union graph associated with the agents is given in Fig. \ref{uniongraph}, which is connected,
implying that Assumption \ref{assumption_gr} holds \cite{BCheng2018CCTA}.

\begin{figure}
  \centering
  \subfigure[$\mathcal {G}_1$]{
\includegraphics[width=0.15\textheight,height=0.15\textwidth]{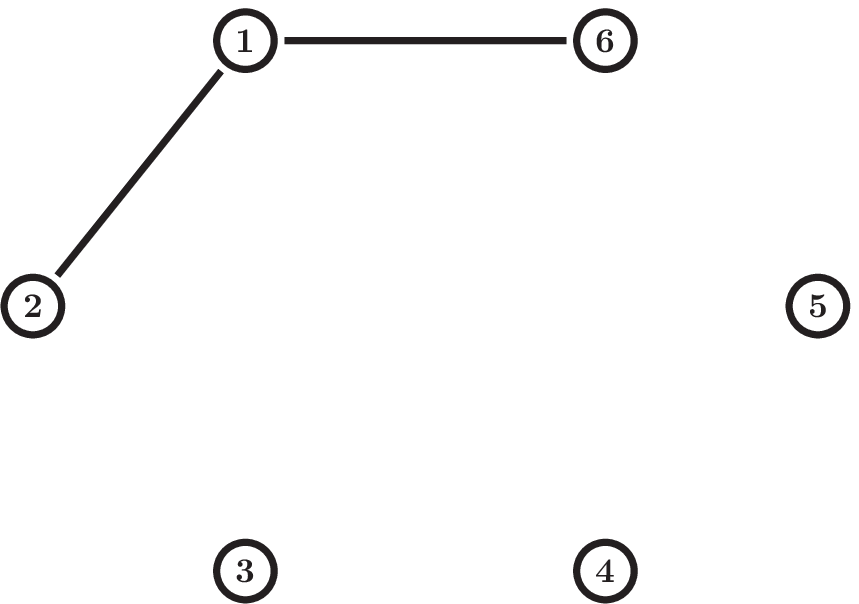}
    \label{fig:subfig:a} 
}
\quad \quad
  \subfigure[$\mathcal {G}_2$]{
\includegraphics[width=0.15\textheight,height=0.15\textwidth]{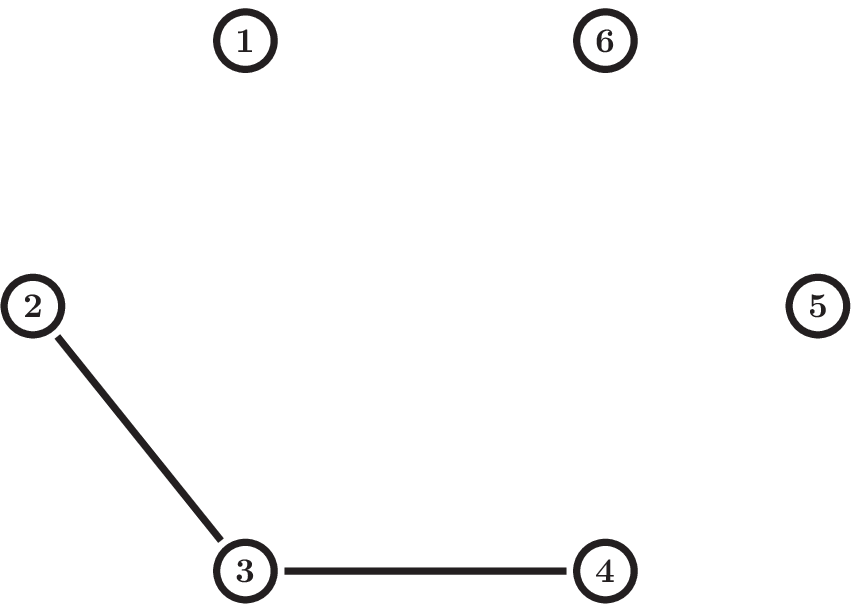}

    \label{fig:subfig:b} 
}
\newline
\newline
      \subfigure[$\mathcal {G}_3$]{
\includegraphics[width=0.15\textheight,height=0.15\textwidth]{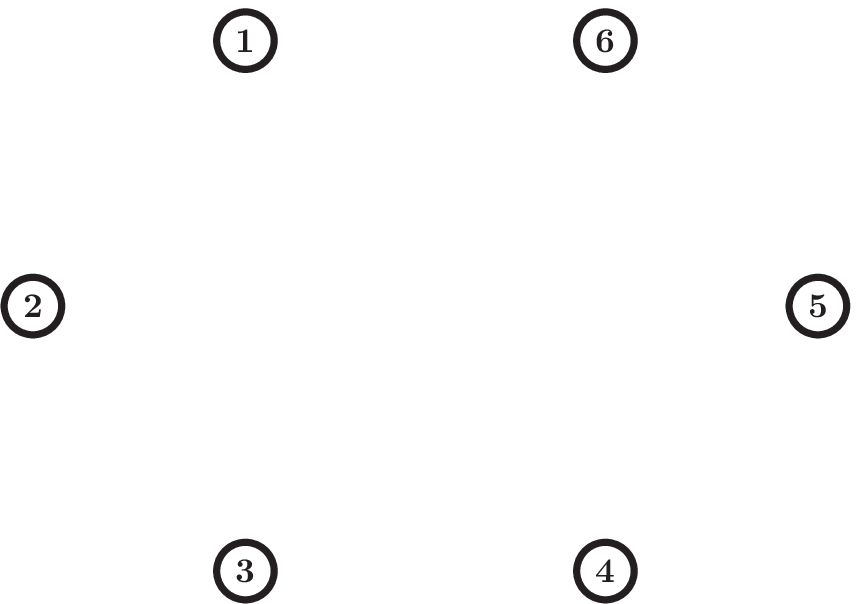}

    \label{fig:subfig:c} 
}
\quad \quad
  \subfigure[$\mathcal {G}_4$]{
\includegraphics[width=0.15\textheight,height=0.15\textwidth]{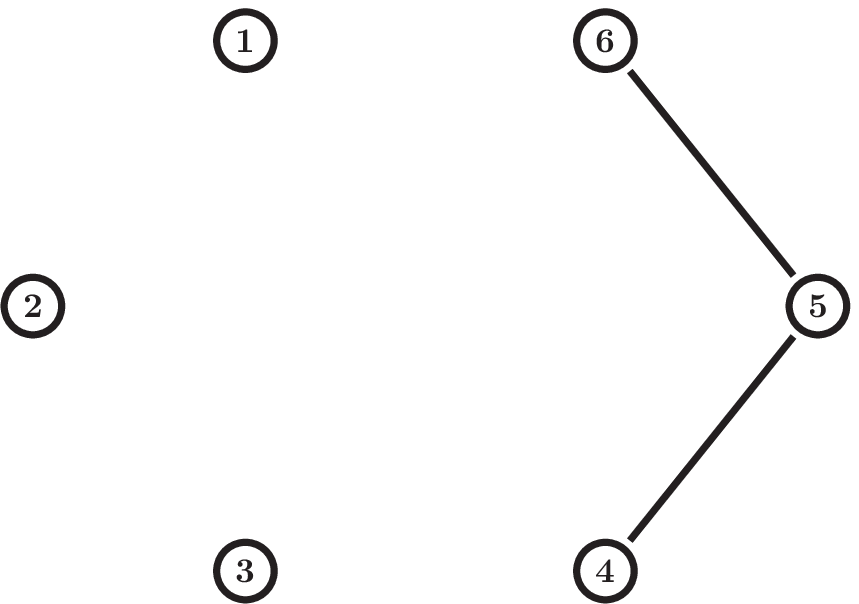}

    \label{fig:subfig:d} 
}
  \caption{Possible interaction topologies among the agents.}
  \label{fig:subfig} 
\end{figure}
\begin{figure}[!htb]
\centering
\includegraphics[width=0.25\textheight,height=0.25\textwidth]{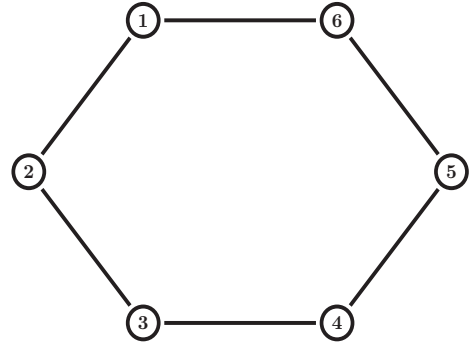}
\caption{The union graph associated with the agents.}
\label{uniongraph}
\end{figure}

To solve the consensus control problem, we use the event-based protocol \dref{pro1} and \dref{eve} with parameters chosen as
$c=5$, $\delta=\mu=\nu=0.5$, and $G=\begin{bmatrix}0&-1\end{bmatrix}$.
The states $x_i$, $i=1,\cdots,6$, are depicted in Fig. \ref{fig-consensuserror-1}, implying that consensus is achieved.
The triggering instants of all agents are presented in Fig. \ref{fig-triggering-1}, which shows that Zeno behaviors are ruled out.
\begin{figure}[!htb]
\centering
\includegraphics[width=0.4\textheight,height=0.4\textwidth]{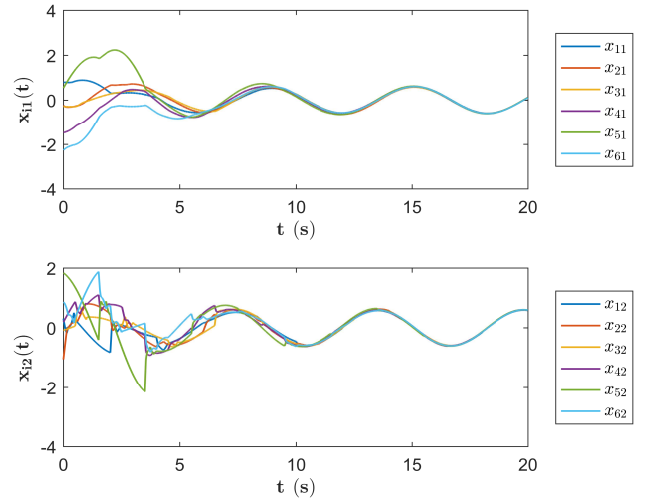}
\caption{The state of each agent.}
\label{fig-consensuserror-1}
\end{figure}
\begin{figure}[!htb]
\centering
\includegraphics[width=0.4\textheight,height=0.3\textwidth]{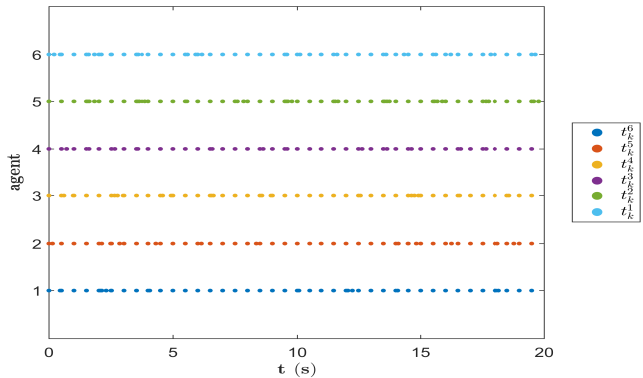}
\caption{Triggering instants of each agent.}
\label{fig-triggering-1}
\end{figure}

{\bf Example 2:}
The leader's dynamics satisfies \dref{exo} with $S=\begin{bmatrix}0&1\\-1&0\end{bmatrix}$
and the dynamics of followers are described by \dref{sys} with
$A_i=\begin{bmatrix}-1&1\\0&-i\end{bmatrix}$, $B_i=\begin{bmatrix}0\\i\end{bmatrix}$,
$C_i=\begin{bmatrix}i\\0\end{bmatrix}$,
$E_i=I_2$,
$F_i=\begin{bmatrix}1\\0\end{bmatrix}$,
$i=1,\cdots,4$.
All agents' initial values are randomly chosen.
Suppose that possible interaction topologies shown in Fig. \ref{fig:subfig} switches as $\mathcal {\bar G}_1\rightarrow \mathcal {\bar G}_2 \rightarrow \mathcal {\bar G}_3 \rightarrow \mathcal {\bar G}_4 \rightarrow \mathcal {\bar G}_1\rightarrow \cdots$, with the dwelling time $\tau=0.5$s.
Note that node $0$ represents the leader and nodes 1-4 denote followers.
It is not difficult to find that
Assumptions 3-7 are satisfied.
\begin{figure}[!htb]
\centering
\includegraphics[width=0.35\textheight,height=0.45\textwidth]{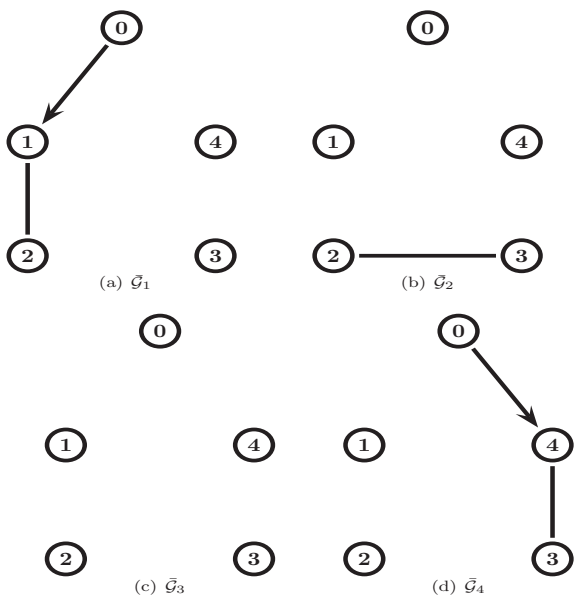}
\caption{Possible interaction topologies.}
\label{fig:subfig}
\end{figure}

To achieve output consensus, we utilize the event-triggered protocol \dref{obser}, \dref{eve1}, and \dref{controller}.
Solving the regulation equation \dref{regulator equa} gives
$\Pi_i=\begin{bmatrix}1/i&1/i\\-1&2/i\end{bmatrix}$,
$U_i=\begin{bmatrix}-1-2/i^2&0\end{bmatrix}$,
and $R=\begin{bmatrix}2&1\end{bmatrix}$,
$i=1,\cdots,4$.
Other parameters in this protocol are chosen as
$c=2$,
$K_{1i}=\begin{bmatrix}-1&-1\end{bmatrix}$,
and $K_{2i}=\begin{bmatrix}-2-1/i-2/i^2&3/i\end{bmatrix}$,
$i=1,\cdots,4$.

The estimate errors $w_i-w_0$, $i=1,\cdots,4$, for $t$ from $0s$ to $30s$, are depicted in Fig. \ref{fig-consensuserror}, implying that the observers \dref{obser} can track the exogenous signal $w_0(t)$.
Event instants of all followers are shown in Fig. \ref{fig-triggering}, indicating that there exist no Zeno behaviors.
The output errors $y_i-y_1$, $i=2,3,4$, are depicted in Fig. \ref{fig-output_error}, implying the achievement of output consensus.

\begin{figure}[!htb]
\centering
\includegraphics[width=0.35\textheight,height=0.4\textwidth]{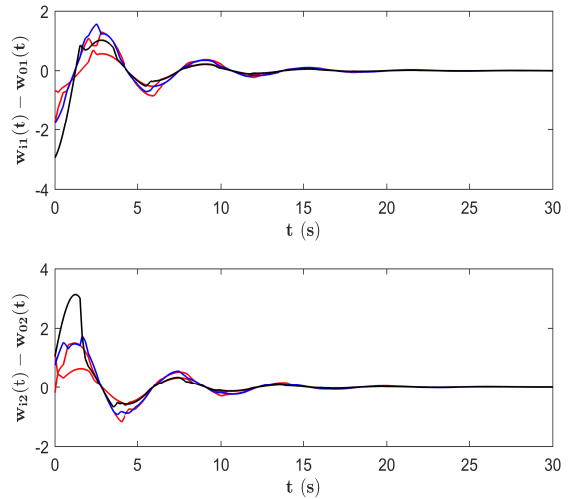}
\caption{The estimate errors $w_i-w_0$, $i=1,\cdots,4$.}
\label{fig-consensuserror}
\end{figure}

\begin{figure}[!htb]
\centering
\includegraphics[width=0.32\textheight,height=0.25\textwidth]{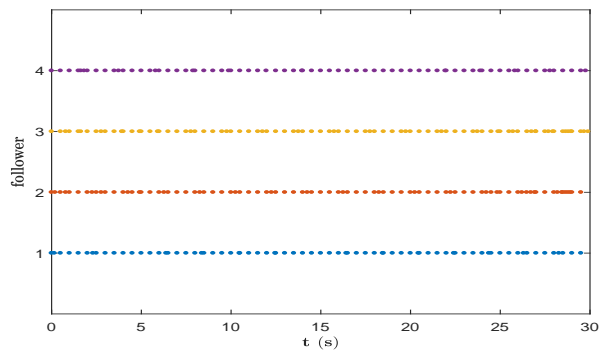}
\caption{Triggering instants of each follower.}
\label{fig-triggering}
\end{figure}
\begin{figure}[!htb]
\centering
\includegraphics[width=0.32\textheight,height=0.3\textwidth]{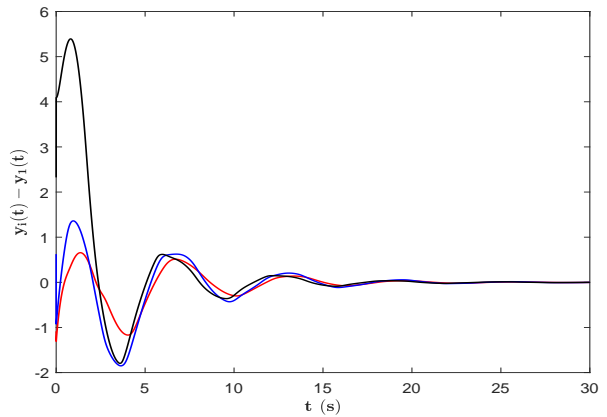}
\caption{The output errors $y_i-y_1$, $i=2,3,4$.}
\label{fig-output_error}
\end{figure}

\section{Conclusion}\label{s_con}
In this paper, distributed event-driven consensus algorithms have been proposed for homogeneous and heterogeneous linear networks with jointly connected switching topologies.
These protocols can be explicitly constructed and utilized in a completely distributed manner.
It is shown that the proposed protocols are able to guarantee the achievement of consensus and a strictly positive lower bound for the interval between different triggering instants.
Extending these results to general directed switching graphs or fixed-time consensus \cite{bo-da-ning2017distributed,zong-yu-zuo2018an} is an interesting work in the future.

\ifCLASSOPTIONcaptionsoff
  \newpage
\fi

\bibliography{T_Cyber_bibfile}

\end{document}